\definecolor{DarkBlue}{rgb}{0.1,0.1,0.5}
\definecolor{Red}{rgb}{0.9,0.1,0.1}
\definecolor{Green}{rgb}{0.3,0.7,0.0}
\definecolor{green2}{rgb}{0.1,0.7,0.2}
\definecolor{blue2}{rgb}{0.0,0.6,0.7}
\definecolor{pink}{rgb}{1,0.0,1}
\definecolor{orange}{rgb}{0.9,0.0,0.1}
\newtheorem{theorem}{Theorem}
\newtheorem{corollary}{Corollary}
\newtheorem{lemma}{Lemma}
\newtheorem{proposition}{Proposition}
\newtheorem{definition}{Definition}
\newtheorem{remark}{Remark}
\newtheorem{example}{Example}
\renewcommand{\d}{\mathrm{d}}
\renewcommand{\d}{\mathrm{d}}
\newcommand{\derpar}[2]{\displaystyle\frac{\partial{#1}}{\partial{#2}}}
\newcommand{\df}{\Omega}
\newcommand{\Tan}{\mathrm{T}}
\newcommand{\inn}{{\mathop{i}\nolimits}}
\newcommand{\Lie}{\mathop{\mathrm{L}}\nolimits}
\newcommand{\bal}{\begin{align*}}
\newcommand{\eal}{\end{align*}}
\def\beq{\begin{equation}}
\def\eeq{\end{equation}}
\def\bea{\begin{eqnarray}}
\def\eea{\end{eqnarray}}
\def\beann{\begin{eqnarray*}}
\def\eeann{\end{eqnarray*}}
\def\ben{\begin{enumerate}}
\def\een{\end{enumerate}}
\def\bit{\begin{itemize}}
\def\eit{\end{itemize}}
\newcommand{\bi}{\begin{itemize} } 
\newcommand{\ei}{\end{itemize} } 
\newcommand{\be}{\begin{enumerate} } 
\newcommand{\ee}{\end{enumerate} } 
\def\df{{\mit\Omega}}
\def\d{{\rm d}}
\def\Tan{{\rm T}}
\def\Lie{\mathop{\rm L}\nolimits}
\def\inn{\mathop{i}\nolimits}
\def\Cinfty{{\rm C}^\infty}
\DeclareMathOperator{\Ima}{Im}
\title{Geometric Gauge Freedom in Multisymplectic Field Theories}
\author{\sffamily 
$^a$Jordi Gaset
\thanks{jordi.gaset@unir.net\quad ORCID: 0000-0001-8796-3149}
\\[1ex]
\normalsize\itshape\sffamily 
$^a$Escuela Superior de Ingeniería y Tecnología,
Universidad Internacional de La Rioja,
Logroño, Spain.
}
\begin{document}

\maketitle

\pagestyle{myheadings}

\thispagestyle{empty}

\begin{abstract}
We use the kernel of a premultisymplecic form to classify its solutions, inspired by the work of M. Gotay and J. Nester. In the case of variational premultisymplectic forms, there is an equivalence relation which classify the solutions in general distributions called expanded solutions. We also present an equivalence relation for sections and a reduction procedure for the system. We apply this results to mechanics, electromagnetism and Metric-Affine gravity, and compare them with other notions of gauge freedom.
\end{abstract}

\noindent\textbf{Keywords:}
Classical field theory, Lagrangian and Hamiltonian formalism, 
 multisymplectic geometry, gauge theory.

\tableofcontents

\section{Introduction}

Multisymplectic manifolds are one of the most successfully geometric frameworks for classical field theories. The reduction of multisymplectic systems by symmetries reminds one of the most active and relevant fields of research in multisymplectic geometry \cite{ blacker_reduction_2020, blacker_reduction_2022,echeverr_remarks_2018, lopez_reduction_1999,madsen_multi-moment_2012,marsden_covariant_1986,sniatycki_multisymplectic_2004}.

Reducing field theories is inherently more difficult than reducing mechanical theories. One of the main reasons is that, in a regular mechanical system, there is one solution for every initial conditions, but a regular multisymplectic field theory usually have more. In other words, a regular mechanical systems has one vector field as a solution, meanwhile a multisymplectic field theory has several distributions as a solution. Consequently, the generalization of the methods in mechanics has a limited application in field theories.

If we want to study and classify the solutions of a mechanical system from a geometric point of view, there are three main steps. First, we should proof that there exists a solution. If not, one can perform a constraint algorithm to find a submanifold were tangent solutions exist. 

Next, one should check if the solutions (understood as a vector field), is unique. This situations was studied by M. Gotay and J. Nester \cite{gotay_presymplectic_1979}, where they show that the multiplicity of solutions is characterized by the vector fields of the kernel of the presymplectic form. They called these elements gauge vector fields (as they describe several solutions which can be interpreted as redundant information). The also present a gauge vector field algorithm to find ''hidden" vector fields by computing Lie parenthesis with a solution. If the distribution for gauge vector fields has good properties, one can reduce the system by quoting the manifold by the gauge distribution, obtaining a regular symplectic form.

Finally, mechanical system can have symmetries, which we can use to simplify the problem. Usually, on considers the restricted system in a submanifold where a conserved quantity is constant. A geometric implementation of this idea is the Marsden-Weinstein reduction \cite{marsden_reduction_1974}. This theorem provides sufficient conditions to have a symplectic system in the submanifold, such that the solutions from the original and the reduced systems are compatible. In the case that the symplectic form becomes presymplectic on the submanifold, the algorithm prescribes a quotient to eliminate the degeneracy of the presymplectic form, like in the Gotay-Nester algorithm.

We have a similar situation in multisymplectic field theories. If the system has no solutions, there is a constraint algorithm which provides a submanifold where tangent solutions exists \cite{de_leon_pre-multisymplectic_2005}. There has been a lot of work to generalize the Marsden-Weinstein algorithm to the multisymplectic setting (see, for instance, \cite{blacker_reduction_2020,echeverr_remarks_2018, sniatycki_multisymplectic_2004}). 

There has been less work in the generalization of Gotay-Nester algorithm to multisymplectic systems. There are two good reasons for it. First, there can be multiple solutions even for regular systems, which cannot happen in mechanics. In other words, there is a multiplicity of solutions which is not described by the kernel of the (pre)multisymplectic form. The other problem is that the gauge vector field algorithm uses the Lie bracket of gauge vector fields and the vector field solution. In multisymplectic systems the solutions are given by distributions (or multivector fields), so this step of the algorithm cannot be transported to multisymplectic geometry. Nevertheless, one can consider the quotient by the kernel of the premultisymplectic form in order to build a regular multisymplectic system \cite{echeverr_remarks_2018}.

In this work we will present a classification of the solutions of a premultisymplectic systems characterized by the kernel of the premultisymplectic form.

\section{Multisymplectic formalism}


This is a brief introduction to the multisymplectic formalism and transverse distributions. For more details on the multisymplectic formalism see \cite{gaset_variational_2016}.

Let  $\pi\colon{J}\to M$ be a fibre bundle, where
$\dim\,M=m\geq 1$ and $\dim\,{J}=n+m$, and $M$ is an
connected orientable manifold with volume form $\eta\in\df^m(M)$. We write $(U;x^\mu,y^j)$,
$\mu=1,\ldots ,m$, $j=1,\ldots ,n$, for local charts of
coordinates in $J$ adapted to the fibred structure, and such that
$\pi^*\eta=\d x^1\wedge\ldots\wedge\d x^m\equiv\d^mx$.

We denote by $V(\pi)$ the distribution of vertical vectors, that is, $V(\pi):=\cup_{p\in J}Ker (T_p\pi)$. The vertical vector fields, which are the sections of the vertical distribution, are denoted $\Gamma (V(\pi))$. They are locally generated by $ \left\{\sfrac{\partial}{\partial y^j}\right\}$. 

\begin{definition}
A generalized distribution $D\subset TJ$ is {\rm transverse} if it has a subdistribution $H\subset D$ which is a smooth distribution of constant rank $m$ such that $T\pi(H)=TM$.
\end{definition}

A transverse distribution has rank greater or equal to $m$ at every point. As the restriction of $T\pi$ to $D$, $T\pi|_D:D\rightarrow TM$, is exhaustive if $D$ is transverse, we also have the decomposition:

\begin{lemma}\label{lem:descomposa}
A transverse distribution $D$ can be decomposed as
$$D=H+ B,$$
where $H$ is a transverse distribution of rank $m$ and $B$ is a generalized distribution such that $B\subset V(\pi)$.
\end{lemma}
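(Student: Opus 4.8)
The plan is to take $B := D \cap V(\pi)$ as the vertical part and to show that the given rank-$m$ subdistribution $H$ already supplies enough ``horizontal'' directions to recover all of $D$. By construction $B$ is a generalized distribution contained in $V(\pi)$, so it only remains to prove the equality $D = H + B$.

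One inclusion is immediate: since $H \subset D$ and $B = D \cap V(\pi) \subset D$, we have $H + B \subset D$ pointwise. The substance of the lemma is the reverse inclusion, and the key observation is that the restriction $T_p\pi|_{H_p}\colon H_p \to T_{\pi(p)}M$ is a linear isomorphism at every point $p \in J$. Indeed, transversality means $T\pi(H) = TM$, so this map is surjective; since $H$ has constant rank $m$ and $\dim T_{\pi(p)}M = m$, a surjection between spaces of equal dimension is a bijection.

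With this in hand, the decomposition is a pointwise argument. Given $v \in D_p$, I would let $w \in H_p$ be the unique vector with $T_p\pi(w) = T_p\pi(v)$, which exists by the isomorphism above. Then $T_p\pi(v - w) = 0$, so $v - w \in V(\pi)_p$; moreover $v - w \in D_p$ because both $v$ and $w$ lie in $D_p$. Hence $v - w \in D_p \cap V(\pi)_p = B_p$, and $v = w + (v - w) \in H_p + B_p$. This proves $D \subset H + B$, and therefore $D = H + B$. One can further note that the sum is in fact direct, since the injectivity of $T_p\pi|_{H_p}$ forces $H_p \cap V(\pi)_p = 0$ and hence $H_p \cap B_p = 0$.

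The argument is essentially linear algebra carried out fibrewise, so I do not expect a serious obstacle. The only point requiring care is that $B = D \cap V(\pi)$ need not have constant rank --- its rank may jump as $p$ varies --- which is precisely why the statement only claims $B$ to be a generalized distribution rather than a smooth constant-rank one; the fixed-rank part $H$ absorbs all the variation lying over the base directions.
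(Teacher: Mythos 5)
Your proof is correct and follows exactly the route the paper intends: the paper states this lemma without proof, justifying it only by the remark that $T\pi|_D$ is surjective when $D$ is transverse, and your fibrewise linear-algebra argument (taking $B = D \cap V(\pi)$ and using that $T_p\pi|_{H_p}$ is an isomorphism onto $T_{\pi(p)}M$) is the natural completion of that remark. Your closing observation that $B$ may have non-constant rank, which is why it is only a generalized distribution, is also accurate.
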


 \begin{definition}
A form $\Omega\in\df^{m+1}({J})$ ($m\geq 1$) is a
{\rm multisymplectic form} if it is closed and $1$-nondegenerate, that is, if the map 
$\flat_{\Omega}\colon\Tan {J} \longrightarrow \Lambda^m\Tan^*{J}$,
defined by $\flat_{\Omega}(x,v)=(x,\inn(v)\Omega_x)$,
for every $x\in{J}$ and $v\in\Tan_x{J}$, is injective. 
If the form is only closed, then it is a {\rm premultisymplectic form}.
 \end{definition}

Some of the results that appear in this work will only hold an special kind of multisymplectic forms. 
 
 \begin{definition} A form $\omega$ is \emph{variational} if
$$
\inn(Z_1)\inn(Z_2)\inn(Z_3)\Omega=0\ ,\text{ for every }
Z_1,Z_2,Z_3\in \Gamma(V(\pi))\,.
$$
\end{definition}

The name "variational" is justified because this is the situation in
the Lagrangian and Hamiltonian formalism of field theories, including the restricted forms obtained after a quotient or considering constraints. It is relevant for multisymplectic \cite{gaset_variational_2016} and also multicontact structures \cite{LGMRR_2022}. In a chart of adapted coordinates defined on an open set $U$, a variational multisymplectic form can be expressed as
\beq
\Omega\vert_U=\d F_j^\mu\wedge\d y^j\wedge\d^{m-1}x_\mu+\d E\wedge\d^m x \ ,
\label{omega}
\eeq
where $\displaystyle\d^{m-1}x_\mu=\inn\left(\derpar{}{x^\mu}\right)\d^mx$,
and
$F_j^\mu(x^\nu,y^i),E(x^\nu,y^i)\in\Cinfty(U)$.

At a point $p\in J$ we can define a family of kernels of $\Omega$, indexed by $1\leq k\leq m+1$:
$$
\ker^k_p \Omega=\{\mathbf{X}\in\Lambda^k T_p^*J | \inn(\mathbf{X})\Omega|_p=0 \}\,.
$$
As usual, we also define $\ker^k\Omega:=\cup_{p\in J}\ker^k_p \Omega$. In this work we will use the cases $k=1$ and $k=m$.

A (pre)multsiymplectic systems $(J,\Omega)$ is a fiber bundle endowed with a (pre)multisympelctic form $\Omega$.

\begin{definition}\label{dfn:sol} Let $(J,\Omega)$ be a (pre)multisymplectic system.
\begin{itemize}
    \item A distribution $D$ of rank $m$ is a \emph{solution} of the system if it is transverse, integrable, and $\Lambda^mD\subset \ker^m\Omega$
    \item A section $\phi:M\rightarrow J$ is a \emph{solution} of the system if
    $$
    \phi^*\inn(Y)\Omega=0\,,\quad\text{ for every } Y\in \mathfrak{X}(J)\,.
    $$
\end{itemize}
\end{definition}

If $D$ is a solution, then its integrable sections are also solutions.  It will be more convenient  to ignore the integrability condition for distributions, thus, we denote by $S(\Omega)$ the set of smooth distributions of rank $m$ which are transverse and $\Lambda^mD\subset \ker^m\Omega$. In particular, we will call the elements of $S(\Omega)$ as solutions, even though they may not be integrable. We will assume that $S(\Omega)$ is not empty.

\begin{remark}\label{rem:admisible} Most examples of (pre)multisymplectic systems impose extra conditions to the solutions. For instance, Lagrangian systems require that the solutions are holonomic. The sections or distributions that fulfill this extra conditions will be called \emph{admissible}. This implies that the precise definition of a solution may be different in every example. In sections \ref{sec:expsol} and \ref{sec:quotient} we will use the generic definition \ref{dfn:sol}. How to apply the results to systems with more restricted definitions of solutions is discussed in section \ref{sec:adm} and \ref{sec:applications}.
\end{remark}

\section{Expanded solutions}\label{sec:expsol}
We want to classify the elements of $S(\Omega)$, which we will assume is not empty (if that is not the case, one may use a constraint algorithm \cite{de_leon_pre-multisymplectic_2005}). In order to do so, we will group them inside new distributions called expanded solutions.

\begin{definition} A generalized distribution $D$ is an \emph{expanded solution} if it is transverse and $\Lambda^m{D}\subset Ker^m\Omega$.
\end{definition}

Since any transverse distribution has a regular subdistribution of rank $m$ which is also transverse, any expanded solution contains a solution. In particular, a trivial expanded solution (that is, an expanded solution with constant rank $m$) is an element of $S(\Omega)$.

\begin{definition}
Two expanded solutions $D_1$ and $D_2$ are \emph{kernel related} if $D_1+D_2$ is an expanded solution. It will be denoted by $D_1\sim_\Omega D_2$.
\end{definition}

In general, $\sim_\omega$ is not an equivalent relation. 

\begin{example}
Consider the fibre bundle $\pi:\mathbb{R}^8 \rightarrow \mathbb{R}^2$, with coordinates such that $\pi(x,y,q,p^x,p^y,u,v,w)=(x,y)$, and the premultisymplectic form
$$
\Omega=\d q\wedge\d p^x\wedge \d y-\d q\wedge\d p^y\wedge \d x+\d q\wedge\d u\wedge \d w\,.
$$
The following distributions with rank $3$ are expanded solutions:
\begin{align*}
    D_1=\left<\derpar{}{x},\derpar{}{y},\derpar{}{u}\right>\,;\quad  D_2=\left<\derpar{}{x},\derpar{}{y},\derpar{}{v}\right>\,;\quad  D_3=\left<\derpar{}{x},\derpar{}{y},\derpar{}{w}\right>\,.
\end{align*}
Moreover, $D_1+D_2$ and $D_2+D_3$ are also expanded solutions, thus $D_1\sim_\Omega D_2$ and $D_2\sim D_3$. Nevertheless, $D_1+D_3$ is not an expanded solution because $i{\left(\sfrac{\partial}{\partial u}\right)}i{\left(\sfrac{\partial}{\partial w}\right)}\Omega\neq 0$. 
\end{example}

As we will see in \ref{prop:equivrel},  if $\Omega$ is variational then $\sim_\Omega$ is an equivalence relation.

\subsection{Characterization with kernel symmetries}

There are several definitions of ''gauge" vector fields. In gauge theory, gauge vector fields represent infinitesimal transformation of the coordinates of the system (but not the space-time) which leave the system invariant. In multisymplectic systems this vectors fields leave the Lagrangian and the multisymplectic form invariant (see the example about electromagnetism \ref{app:EM}, or the case of general relativity \cite{Gaset2018}). Therefore, they are Lagrangian symmetries \cite{echeverria-enriquez_geometry_2000} and, in general, they are not in the kernel of the premultisymplectic form. 

In the Gotay-Nester algorithm \cite{gotay_presymplectic_1979}, every vector field in the kernel of the presymplectic form is a gauge vector field. Moreover, there are ''secundary" gauge vector fields constructed using the solution. 

Finally, in \cite{gaset_variational_2016} we defined the vector fields of the kernel of a premultisymplectic form as ''geometric" gauge vector fields. The name came from the Gotay\&Nester's definition, but we added the ''geometric" adjective to differentiate them from the usual gauge vector fields used in theoretical physics. Incidentally, in \cite{dadhich_equivalence_2012} the authors refer as gauge freedom a transformation which turns out to be described by a geometric gauge vector field \cite{gaset_new_2019}.

All these vector fields encapsulate the idea of redundant information that the word ''gauge" usually refers to. Nevertheless, having the same name to every object is impractical when we want to study the multiple solutions of a multisymplectic system. Consequently, in this work we will call the geometric gauge vector field as kernel vector fields, which is a more descriptive name.

\begin{definition}Given a system $(J,\Omega)$, the \emph{kernel distribution} is $K:=Ker^1\Omega\cap V(\pi)$.

A kernel (or geometric gauge) vector field is a smooth section of $K$,  which we will denote as $\Gamma(K)$.
\end{definition}


 


Kernel vector fields lead to expanded solutions:

\begin{proposition}\label{prop:kernel->ext}
Given  system $(J,\Omega)$, if there exists a section $g:J\rightarrow K$ (which we will not assume smooth), then for every expanded solution $D$ of $\Omega$ , $D+\!<\!g\!>$ is an expanded solution.
\end{proposition}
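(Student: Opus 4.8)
The plan is to check the two defining properties of an expanded solution for $D+\langle g\rangle$ in turn, treating the kernel condition pointwise and reducing it to a linear-algebra check on a spanning set. Throughout I fix a point $p\in J$ and set $v:=g(p)\in K_p$; since $K=\ker^1\Omega\cap V(\pi)\subseteq\ker^1\Omega$, this gives $\inn(v)\Omega|_p=0$, which is the only property of $g$ I will use.

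Transversality is inherited for free. As $D$ is an expanded solution it is transverse, so by definition it contains a smooth subdistribution $H\subset D$ of constant rank $m$ with $T\pi(H)=TM$. Since $H\subset D\subseteq D+\langle g\rangle$, the same $H$ witnesses the transversality of $D+\langle g\rangle$. Note that this step is insensitive to the (possible) non-smoothness of $g$, because transversality only requires the existence of the smooth distribution $H$ inside the enlarged distribution.

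For the kernel condition I would work fiberwise, using that $\ker^m_p\Omega$ is the kernel of the linear map $\mathbf{X}\mapsto\inn(\mathbf{X})\Omega|_p$ and hence a linear subspace of $\Lambda^m T_pJ$. If $v\in D_p$ then $D_p+\langle v\rangle=D_p$ and there is nothing to prove, so assume $v\notin D_p$ and complete a basis of $D_p$ to a basis of $D_p+\langle v\rangle$ by adjoining $v$. Then $\Lambda^m(D_p+\langle v\rangle)$ is spanned by the decomposable $m$-vectors formed from this basis, each of which is of exactly one of two kinds: those built solely from basis vectors of $D_p$, which lie in $\Lambda^m D_p\subset\ker^m_p\Omega$ by the hypothesis that $D$ is an expanded solution; and those having $v$ among their factors. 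For the latter, I would reorder the wedge product (at the cost of a sign) into the form $v\wedge\mathbf{Y}$ and apply the contraction identity $\inn(v\wedge\mathbf{Y})\Omega=\inn(\mathbf{Y})\big(\inn(v)\Omega\big)$, which vanishes because $\inn(v)\Omega|_p=0$. Thus every generator lies in $\ker^m_p\Omega$, and by linearity so does all of $\Lambda^m(D_p+\langle v\rangle)$; as $p$ was arbitrary this yields $\Lambda^m(D+\langle g\rangle)\subset\ker^m\Omega$.

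The only delicate point is the bookkeeping in the contraction identity: one must be sure that having $v$ as a factor really forces $\inn(\cdot)\Omega=0$, rather than merely reducing it to $\inn(v)$ applied to some already-contracted form. This is exactly the content of writing the generator as $\pm\,v\wedge\mathbf{Y}$ so that $\inn(v)$ is the innermost (first-applied) contraction acting directly on $\Omega$; with $\inn(v)\Omega|_p=0$ the whole expression collapses regardless of $\mathbf{Y}$. Combining this with the transversality observation shows $D+\langle g\rangle$ is an expanded solution, and it is worth remarking that the verticality half of the condition $K=\ker^1\Omega\cap V(\pi)$ is never invoked here---only $g(p)\in\ker^1\Omega$ matters.
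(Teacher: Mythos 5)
Your proof is correct and follows essentially the same route as the paper's: transversality is inherited because the witnessing rank-$m$ subdistribution of $D$ sits inside $D+\langle g\rangle$, and the kernel condition is checked pointwise by splitting the generators of $\Lambda^m(D_p+\langle g_p\rangle)$ into those in $\Lambda^m D_p$ and those containing $g_p$, the latter killed by $\inn(g_p)\Omega=0$. Your added remark that only $g\in\ker^1\Omega$ (not verticality) is used here is accurate and consistent with the paper's argument.
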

\begin{proof}
Since $D$ is transverse, $D+\!<\!g\!>$ is also transverse.
At every point $p\in J$ an element of ${\bf X}\in\Lambda^m(D_p+\!<\!g_p\!>)$ can be written as:
$${\bf X}=u_1\wedge\dots\wedge u_m+\sum_{i=1}^mg_p\wedge v_1\dots\wedge v_{m-1},$$

where $u_1,\dots,u_m,v_1,\dots,v_{m-1}\in D_p$. Using that $g_p\in K_p$ and $\Lambda^mD\subset Ker^m\Omega$, we derive that ${\bf X}\in Ker^m\Omega$. Therefore $D+\!<\!g\!>$ is an expanded solution.
\end{proof}

In particular, given a a solution of $\Omega$ we can use a section of $K$ to construct an extended solution. Another way of using this result is considering an open set where the solution is generated by a set of vector fields and a smooth section of $K$. Then, the addition of the section to any of the vector field generates a new distribution which is also a solution of $\Omega$.

The converse does not hold in general, that is, the existence of expanded solutions does not implies the existence of kernel vectors (see example \ref{ex:novariationalexpandsol}). Nevertheless, the equivalence holds for a special class of (pre)multisymplectic forms.

\begin{lemma}\label{lem:tech1} Consider a variational $m+1$ form $\omega$ and a vertical vector at a point $p\in J$,  $Y\in V_p(\pi)$. If there exists a transverse distribution $D$ of rank $m$, generated in $p\in J$ by $\{v_i\}_{i\in I}$, such that 
$$i\left({\bigwedge_{i\in L}v_i}\right)i({Y})\omega_p=0 \quad \text{ for every } L\subset I, |L|=m-1\,,$$
then $i_{Y}\omega_p=0$.
\end{lemma}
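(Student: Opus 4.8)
The plan is to prove directly that the $m$-form $i(Y)\omega_p$ vanishes, by evaluating it on a basis of $T_pJ$ adapted to the vertical/horizontal splitting and sorting the resulting $m$-tuples according to how many vertical legs they contain.

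First I would extract a direct sum decomposition from transversality. Because $D$ is transverse and already of rank $m$, the rank-$m$ transverse subdistribution supplied by the definition must coincide with $D$ itself, so $T\pi$ maps $D_p$ onto $T_{\pi(p)}M$; as both spaces have dimension $m$, the restriction $T\pi|_{D_p}$ is an isomorphism. Hence $D_p\cap V_p(\pi)=0$, and a dimension count gives $T_pJ=D_p\oplus V_p(\pi)$. I would then choose a basis $v_1,\dots,v_m$ of $D_p$ taken from the generating family $\{v_i\}_{i\in I}$ (any $m$ of them that span $D_p$) and complete it with a basis $e_1,\dots,e_n$ of $V_p(\pi)$, so that $\{v_1,\dots,v_m,e_1,\dots,e_n\}$ is a basis of $T_pJ$.

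Since $i(Y)\omega_p$ is an $m$-form, it suffices to show it vanishes on every $m$-tuple of basis vectors, say one containing $k$ of the $v$'s and $m-k$ of the $e$'s. Here the two hypotheses combine cleanly. If $m-k\ge 2$, the tuple carries at least two vertical legs; together with the vertical $Y$ this gives three vertical entries of $\omega_p$, and since the variational condition is tensorial it holds pointwise, so $\omega_p(Y,\dots)=0$. If $m-k=1$, the tuple is $v_{i_1},\dots,v_{i_{m-1}},e_j$ in some order, and $\omega_p(Y,v_{i_1},\dots,v_{i_{m-1}},e_j)$ is, up to sign, the value on $e_j$ of the $1$-form $i\!\left(\bigwedge_{i\in L}v_i\right)i(Y)\omega_p$ with $L=\{i_1,\dots,i_{m-1}\}$, which vanishes by assumption. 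Finally, if $k=m$ the tuple is $(v_1,\dots,v_m)$, and the same hypothesis with $L=\{1,\dots,m-1\}$ evaluated on $w=v_m$ gives $\omega_p(Y,v_1,\dots,v_m)=0$. Every basis $m$-tuple falls into one of these cases, so $i(Y)\omega_p=0$.

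The only delicate points, rather than real obstacles, are (i) justifying the splitting $T_pJ=D_p\oplus V_p(\pi)$, which is precisely where the rank-$m$ assumption enters, and (ii) keeping track of antisymmetry and signs when identifying $i\!\left(\bigwedge_{i\in L}v_i\right)i(Y)\omega_p$ evaluated on a vector with the corresponding value of $\omega_p$; neither affects the vanishing conclusion. I expect no serious difficulty overall: once the adapted basis is fixed, the variational condition disposes of every tuple with two or more vertical legs, and the stated hypothesis disposes of the remaining tuples with at most one vertical leg.
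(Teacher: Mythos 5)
Your proof is correct and follows essentially the same route as the paper: decompose $T_pJ$ using the transverse rank-$m$ distribution and the vertical subbundle, evaluate $i(Y)\omega_p$ on wedge products sorted by the number of vertical factors, kill the tuples with two or more vertical legs via the variational condition, and kill the rest via the stated hypothesis. The only cosmetic difference is that you make the splitting a direct sum and work with an explicit adapted basis, whereas the paper argues with spanning decomposable elements; the substance is identical.
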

\begin{proof} $i_{Y}\omega_p \in \bigwedge^mT_p^*J$ and is equal to $0$ if it vanishes by the action of all elements of $\bigwedge^mT_pJ$. Since $D$ is transverse and has rank $m$, we can consider the decomposition $D+ V(\pi)=TJ$.  Therefore, any element of $\bigwedge^mT_pJ$ is a linear combination of elements of the from $\mathbf{X}=z_1\wedge \cdots\wedge z_k\wedge v_1\wedge\cdots\wedge v_r$, where $k+r=m$ and  $z_1,\dots, z_k\in V_p(\pi)$ and $v_1,\dots v_r\in D_p$. If $k>1$, then $i_\mathbf{X}(i_{Y}\omega_p)=0$ because $\omega$ is variational ($\mathbf{X}\wedge Y$ contains three or more vertical vectors or it is null).  If $k=1$:
$$i_\mathbf{X}(i_{Y}\omega)=i_{Y\wedge z_1\wedge v_1\wedge\cdots\wedge v_r}\omega=(-1)^ri_{z_1}(i_{Y\wedge v_1\wedge\cdots\wedge v_r}\omega_p)=0$$
for hypothesis. For $k=0$ we can use the same argument.
As a consequence, $i_Y\omega=0$
\end{proof}
\begin{theorem}\label{theo:ext->kernel}
If $\Omega$ is variational, every expanded solution $D$ can be decomposed as
$$D=H+K_D,$$
where $H$ is distribution solution of $(J,\Omega)$ and $K_D$ is a subdistribution of $K$, the kernel distribution.
\end{theorem}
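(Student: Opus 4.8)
The plan is to obtain the decomposition by first splitting off a rank-$m$ transverse part via Lemma~\ref{lem:descomposa}, and then identifying the remaining vertical part with a subdistribution of $K$ using the variational hypothesis through Lemma~\ref{lem:tech1}. Since an expanded solution $D$ is by definition transverse, Lemma~\ref{lem:descomposa} yields $D = H + B$ with $H$ a transverse distribution of constant rank $m$ and $B \subset V(\pi)$. I would take this $H$ and set $K_D := B$, so that the theorem reduces to verifying the two asserted properties of $H$ and $B$.

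That $H$ is a solution is immediate and requires no further work: $H$ is transverse of rank $m$ by construction, and since $H \subset D$ we have $\Lambda^m H \subset \Lambda^m D \subset \ker^m\Omega$, so $H \in S(\Omega)$. The substantive step is to prove $B \subset K$, i.e.\ that every vertical vector of $B$ lies in $\ker^1\Omega$; this is exactly where variationality is used, via Lemma~\ref{lem:tech1}. Fix $p \in J$ and $Y \in B_p \subset V_p(\pi)$, and let $v_1,\dots,v_m$ be a basis of $H_p$, so that $H$ is the transverse rank-$m$ distribution generated at $p$ by $\{v_i\}$. For any index set $L$ with $|L| = m-1$, the $m$-vector $\big(\bigwedge_{i\in L} v_i\big)\wedge Y$ has all of its factors in $D_p$ (the $v_i$ lie in $H \subset D$ and $Y \in B \subset D$), hence it belongs to $\Lambda^m D_p \subset \ker^m\Omega$. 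Contracting $\Omega_p$ against it therefore vanishes, which is precisely the hypothesis $i\big(\bigwedge_{i\in L} v_i\big)i(Y)\Omega_p = 0$ of Lemma~\ref{lem:tech1} (up to the sign produced by reordering the contractions). Applying that lemma with the auxiliary distribution $H$ gives $i_Y\Omega_p = 0$, so $Y \in \ker^1\Omega \cap V(\pi) = K$. As $p$ and $Y$ were arbitrary, $K_D = B$ is a subdistribution of $K$, and $D = H + K_D$ is the claimed decomposition.

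I expect the only genuine obstacle to be the correct setup of Lemma~\ref{lem:tech1}: one must recognize that the rank-$m$ factor $H$ is the right auxiliary transverse distribution, and that the required vanishing of the $(m-1)$-fold contractions is nothing but the statement that $\big(\bigwedge_{i\in L} v_i\big)\wedge Y$ lives in $\Lambda^m D \subset \ker^m\Omega$. The variational condition is doing the real work inside Lemma~\ref{lem:tech1}, where it kills the terms in the decomposition of an arbitrary $m$-vector that carry two or more additional vertical directions, and it is indispensable here: without it an expanded solution need not arise from kernel vectors at all, as the failure of the converse noted just before the statement already indicates.
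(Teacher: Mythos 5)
Your proposal is correct and follows essentially the same route as the paper: decompose $D=H+B$ with $H$ a rank-$m$ transverse (hence solution) subdistribution and $B\subset V(\pi)$, observe that the $(m-1)$-fold contractions against a basis of $H_p$ wedged with $Y\in B_p$ vanish because they land in $\Lambda^m D\subset\ker^m\Omega$, and invoke Lemma~\ref{lem:tech1} to conclude $B\subset K$. Your write-up is in fact slightly more explicit than the paper's about why $H$ is a solution and why the hypothesis of Lemma~\ref{lem:tech1} is satisfied, but the argument is the same.
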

\begin{proof}
Since it is an expanded solution, it has a solution, which can be choose to define $H$. Therefore, there exists a generalized distribution $B\subset V(\pi)$ sucht that $D=H+ B$. At a point $p\in J$ suppose that $H_p$ is generated by $\{X_i\}$, $i=1,\dots,m$. For any element $Y\in B_p$, we have
$$i\left({\bigwedge_{i\in L}X_i}\right)i({Y})\Omega_p=0 \quad \text{ for every } L\subset I, |L|=m-1\,,$$

because $\Lambda^m{D}\subset K^m$. Recalling that $\Omega$ is variational and using lemma \ref{lem:tech1}, we derive $i_Y\Omega_p=0$. As this holds for every point $p$, we conclude that $B\subset K$.
\end{proof}

The following example shows that being variational is not a necessary condition.
\begin{example}
Consider the fibre bundle $\pi:\mathbb{R}^5 \rightarrow \mathbb{R}^2$, with coordinates such that $\pi(x,y,q,p^x,p^y)=(x,y)$, and the multisymplectic form $
\Omega=\d q\wedge\d p^x\wedge \d y-\d q\wedge\d p^y\wedge \d x+\d q\wedge\d p^x\wedge \d p^y\,.
$ $\Omega$ is not variational, but it has no non-trivial expanded solutions, thus \ref{theo:ext->kernel} holds.
\end{example}

\begin{corollary}
Consider a system $(J,\Omega)$ where $\Omega$ is variational. $\Omega$ has a non trivial expanded solution  if, and only if,  $K \neq \{0\}$.
\end{corollary}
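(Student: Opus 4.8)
The plan is to prove the two implications of the equivalence separately, reading ``non-trivial'' as ``of rank strictly greater than $m$ at some point'', since every expanded solution is transverse and hence of rank at least $m$ everywhere (the trivial ones being exactly those of constant rank $m$).

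For the implication $K\neq\{0\}\Rightarrow$ ``a non-trivial expanded solution exists'', I would start from the standing assumption that $S(\Omega)\neq\emptyset$ and fix a solution $H$, i.e.\ a smooth transverse distribution of rank $m$ with $\Lambda^m H\subset\ker^m\Omega$. Choosing a point $p$ and a nonzero $g_p\in K_p=\ker^1_p\Omega\cap V_p(\pi)$, I would define a (not necessarily smooth) section $g\colon J\to K$ by $g(p)=g_p$ and $g(q)=0$ for $q\neq p$, and then invoke Proposition \ref{prop:kernel->ext} to conclude that $D:=H+\langle g\rangle$ is an expanded solution.

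The only real point to check is that this $D$ is genuinely non-trivial, and this is where I expect the one subtlety to live: I must verify that $g_p\notin H_p$, so that the rank of $D$ at $p$ is $m+1$. This follows because $H$ is transverse of rank exactly $m$ with $T\pi(H)=TM$, so $T\pi|_{H_p}\colon H_p\to T_{\pi(p)}M$ is a surjection between spaces of equal dimension, hence an isomorphism, whence $H_p\cap V_p(\pi)=\ker(T\pi|_{H_p})=\{0\}$; since $0\neq g_p\in V_p(\pi)$, indeed $g_p\notin H_p$ and $D$ has rank $m+1$ at $p$.

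For the converse, I would assume a non-trivial expanded solution $D$ exists and apply Theorem \ref{theo:ext->kernel} (valid because $\Omega$ is variational) to decompose $D=H+K_D$ with $H$ a rank-$m$ solution and $K_D\subset K$. At a point $p$ where $\mathrm{rank}\,D_p>m$ we have $D_p=H_p+(K_D)_p$; were $(K_D)_p=\{0\}$, then $D_p=H_p$ would have rank $m$, a contradiction. Hence $(K_D)_p\neq\{0\}$, and since $K_D\subset K$ this forces $K\neq\{0\}$. The whole argument is short once Proposition \ref{prop:kernel->ext} and Theorem \ref{theo:ext->kernel} are in hand; the main obstacle is purely the bookkeeping guaranteeing non-triviality, namely the transversality fact that a rank-$m$ solution meets the vertical distribution only in $\{0\}$ while $K\subset V(\pi)$, so that a nonzero kernel vector always lies outside the chosen solution.
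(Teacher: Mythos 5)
Your proof is correct and follows exactly the route the paper intends: the corollary is stated there without proof as an immediate consequence of Proposition \ref{prop:kernel->ext} (for the direction $K\neq\{0\}\Rightarrow$ existence, using the standing assumption $S(\Omega)\neq\emptyset$) and Theorem \ref{theo:ext->kernel} (for the converse, which is where the variational hypothesis is used). Your extra care in checking $g_p\notin H_p$ via $H_p\cap V_p(\pi)=\{0\}$ is precisely the bookkeeping the paper leaves implicit, and it is handled correctly.
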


Finally, as a consequence of theorem \ref{theo:ext->kernel}, the induced relation $\sim_\Omega$ is indeed and equivalence relation when $\Omega$ is variational.

\begin{proposition}\label{prop:equivrel}
If $\Omega$ is variational, then $\sim_\Omega$ is an equivalence relation.
\end{proposition}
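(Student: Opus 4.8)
The plan is to verify the three axioms of an equivalence relation, of which reflexivity and symmetry are immediate and transitivity is the only real content. Reflexivity holds because $D+D=D$ is an expanded solution by hypothesis, so $D\sim_\Omega D$; symmetry holds because the defining condition that $D_1+D_2$ be an expanded solution is manifestly symmetric in $D_1$ and $D_2$. Hence I would spend all the effort on transitivity: assuming $D_1+D_2$ and $D_2+D_3$ are expanded solutions, I must show that $D_1+D_3$ is one as well.

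First I would record a sharper form of Theorem \ref{theo:ext->kernel}: for any expanded solution $D$ one has $D\cap V(\pi)\subset K$. Indeed, writing $D=H+K_D$ with $H$ a rank-$m$ solution and $K_D\subset K$, any vertical $v\in D$ decomposes as $v=h+k$ with $h\in H$ and $k\in K_D$; since $H$ is transverse of rank $m$ it meets $V(\pi)$ trivially, forcing $h=0$ and $v=k\in K$. Applying this to the expanded solutions $D_1+D_2$ and $D_2+D_3$ gives $(D_1+D_2)\cap V(\pi)\subset K$ and $(D_2+D_3)\cap V(\pi)\subset K$, which will be the engine of the argument.

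The distribution $D_1+D_3$ is transverse because $D_1$ already is, so it remains to prove $\Lambda^m(D_1+D_3)\subset\ker^m\Omega$ pointwise. Fixing $p$, I would expand a decomposable element of $\Lambda^m(D_1+D_3)_p$ by splitting each factor into a $D_1(p)$-part and a $D_3(p)$-part, obtaining a sum of wedges with $s$ factors from $D_1(p)$ and $m-s$ from $D_3(p)$. The pure terms $s=m$ and $s=0$ lie in $\Lambda^mD_1$ and $\Lambda^mD_3$ and are killed by $\Omega$ since $D_1$ and $D_3$ are expanded solutions. The crux is the mixed terms, and here I would use $D_2$ as a bridge: choose a rank-$m$ solution $H_2\subset D_2$, which projects isomorphically onto $TM$ by transversality. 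Given $a\in D_1(p)$, let $h\in H_2(p)$ be the unique vector with $T\pi(h)=T\pi(a)$; then $a-h$ is vertical and lies in $(D_1+D_2)(p)\cap V(\pi)\subset K$, so $a=h+z$ with $h\in H_2(p)$ and $z\in K_p$. Likewise every $c\in D_3(p)$ writes as $c=h'+w$ with $h'\in H_2(p)$ and $w\in (D_2+D_3)(p)\cap V(\pi)\subset K_p$.

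Substituting these into a mixed wedge and expanding, every resulting term that retains a factor $z$ or $w$ contracts to zero against $\Omega$, because $K\subset\ker^1\Omega$ and a single kernel factor can be anticommuted to the front of the interior product. The only surviving term is the one built entirely from $H_2$-vectors, which lies in $\Lambda^mH_2\subset\ker^m\Omega$ and therefore also vanishes. Thus all mixed terms contract trivially, giving $\Lambda^m(D_1+D_3)\subset\ker^m\Omega$ and hence $D_1\sim_\Omega D_3$. The main obstacle is precisely this mixed-term step: it is where the variational hypothesis is genuinely used, through Theorem \ref{theo:ext->kernel}, which guarantees that the vertical differences $z$ and $w$ land in the kernel $K$, and where $D_2$ must be exploited to connect the $D_1$- and $D_3$-directions.
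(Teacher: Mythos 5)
Your proof is correct, and it rests on the same key input as the paper's: both arguments reduce everything to transitivity and invoke Theorem \ref{theo:ext->kernel} on the expanded solutions $D_1+D_2$ and $D_2+D_3$ to force the relevant vertical vectors into $K$. The bookkeeping differs, though. The paper decomposes all six distributions via Lemma \ref{lem:descomposa} and Theorem \ref{theo:ext->kernel}, anchors on $H_1\subset D_1$, and chases inclusions to show that the vertical complement $B_{13}$ of $D_1+D_3$ satisfies $B_{13}\subset B_1+B_3+B_{23}+B_{12}\subset K$, concluding by Proposition \ref{prop:kernel->ext}. You instead anchor on a rank-$m$ solution $H_2\subset D_2$ and prove the pointwise statement $D_1(p)+D_3(p)\subset H_2(p)+K_p$, by matching each vector of $D_1$ or $D_3$ with its unique $H_2$-lift over $T\pi$ and observing that the difference is vertical inside $D_1+D_2$ (respectively $D_2+D_3$), hence in $K$ by your extracted corollary $D\cap V(\pi)\subset K$; the wedge expansion then kills every mixed term. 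Your version buys a slightly cleaner conclusion ($D_1+D_3$ is a transverse subdistribution of the expanded solution $H_2+K$) and avoids the paper's step of arguing that the $H_1$-contribution to a vertical vector must vanish, at the cost of redoing by hand the contraction computation that Proposition \ref{prop:kernel->ext} already packages. Both uses of the variational hypothesis are identical: it enters only through Theorem \ref{theo:ext->kernel}.
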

\begin{proof}
The relation is reflexive and symmetric by the properties of the sum of vector spaces. Consider three expanded solutions $D_1$, $D_2$ and $D_3$, such that $D_1\sim_\Omega D_2$ and $D_2\sim_\Omega D_3$.  By theorem \ref{theo:ext->kernel} and lemma \ref{lem:descomposa}, we can consider the decomposition:
\begin{align*}
    &D_1=H_1+B_1\,;&& D_2=H_2+B_2\,;&& D_3=H_3+B_3\,;
    \\
   & D_{1}+D_2=H_1+B_{12}\,;&& D_{2}+D_3=H_2+B_{23}\,;&& D_{1}+D_3=H_1+B_{13}\,.
\end{align*}

The distributions $H_1$, $H_2$ and $H_3$ are transverse with rank $m$; $B_1$, $B_2$, $B_3$, $B_{12}$, $B_{23}\subset K$ and $B_{13}\subset V(\pi)$. $D_{1}+D_3$ contains $H_1$, which is transverse and has rank $m$, thus $D_{1}+D_3$ is transverse. Moreover, since $H_3\subset H_2+B_{23}\subset H_1+B_{12}+B_{23}$, we have that:
$$
B_{13}\subset H_1+B_1+H_3+B_3\subset H_1+B_1+B_3 +B_{23}+B_{12}\,.
$$
Any element  of $B_{13}$ can be written as a sum of elements of the right hand side but,  since $B_{13}\subset V(\pi)$, the contribution of $H_1$ must be the vector $0$. Therefore, $B_{13}\subset B_1+B_3+B_{23}+B_{12}\subset K$. By proposition \ref{prop:kernel->ext}, $D_1+D_3$ is an expanded solution. 
\end{proof}

The condition of being variational is a sufficient condition but not necessary:

\begin{example}\label{ex:novariationalexpandsol}
Consider the fibre bundle $\pi:\mathbb{R}^6 \rightarrow \mathbb{R}^2$, with coordinates such that $\pi(x,y,q,p^x,p^y,u)=(x,y)$, and the multisymplectic form
$$
\Omega=\d q\wedge\d p^x\wedge \d y-\d q\wedge\d p^y\wedge \d x+\d u\wedge\d p^x\wedge \d p^y\,.
$$
$\Omega$ is not variational and $K=\{0\}$. Nevertheless, we have a non-trivial expanded solution:
$$
D=\left<\derpar{}{x},\derpar{}{y},\derpar{}{u}\right>\,.
$$
$\Omega$ is not variational but $\sim_\Omega$ is an equivalence relation, thus being variational is a sufficient condition but not necessary.
\end{example}

\section{Quotient and reduction}\label{sec:quotient}

The equivalence relation presented in the previous section allows us to perform a quotient in the space of solutions of the system, as a distributions. Nevertheless, in application one prefer to take the quotient in the manifold or, at least, in the space of solutions as a section.

The strong kernel distribution $K$ is involutive thus it induces a folliation of $J$. Nevertheless, the set of leafs, denoted $\sfrac JK$, may not have good properties. From now on, we assume that $\sfrac JK$ has a differential structure such that the quotient manifold $\xi:J\rightarrow \sfrac JK$ is a submersion. Since $K$ is a subset of vertical vectors, there exists a unique projection $\pi_K: \sfrac JK\rightarrow M$ such that $\pi_K\circ\xi=\pi$. Namely, the following diagram commutes:

$$
\xymatrix{
& J\ar[rrd]^<(0.45){\xi}\ar[dd]^<(0.45){\pi} \ & \ &
\\
& \ & \ & \sfrac JK\ar[lld]^<(0.45){\pi_K}
\\
& M\ &  \ &
}
$$

We are interested in what extend we can relate solutions of the original system and the quoted. First of all, not all solutions of the original problem are projectable to the quoted system.
\begin{definition}
A distribution $D\subset TJ$ is $\xi$-projectable if, at every point $q\in \sfrac JK$ and for every pair of points in the fiber $p_1, p_2\in \xi^{-1}(q)$, 
$$
T_{p_1}\xi(D_{p_1})=T_{p_2}\xi(D_{p_2}).
$$
\end{definition}

If a distribution is projectable, we can associate a corresponding distribution in $\sfrac JK$ by choosing a section $\beta$ of $\xi$ (a representative for every class):
$$
D_K\equiv\bigcup_{q\in \sfrac JK}T_{\beta(q)}\xi(D_{\beta(q)})
=T\xi(D)$$

\begin{lemma}
The associated distribution to a projectable $\pi$-transverse distribution is $\pi_K$-transverse
\end{lemma}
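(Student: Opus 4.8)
The plan is to prove $\pi_K$-transversality by exhibiting an explicit rank-$m$ witness for $D_K$, obtained by pushing down through $\xi$ the rank-$m$ witness for the $\pi$-transversality of $D$. Recall that the associated distribution is defined, via a fixed smooth section $\beta$ of $\xi$, by $(D_K)_q = T_{\beta(q)}\xi(D_{\beta(q)})$, and that projectability of $D$ is exactly what makes this independent of the representative. Since $D$ is $\pi$-transverse, there is a smooth subdistribution $H\subset D$ of constant rank $m$ with $T\pi(H)=TM$. My candidate witness is $H_K$, defined using the \emph{same} section $\beta$ by $(H_K)_q := T_{\beta(q)}\xi(H_{\beta(q)})$.

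First I would establish rank preservation. The key identity is $H\cap K=\{0\}$, checked pointwise: since $K\subset V(\pi)=\ker T\pi$ while $T\pi|_H$ is fibrewise injective (it is onto $TM$ and $\operatorname{rank}H=m=\dim M$), any vector of $H_p\cap K_p$ lies in $\ker(T_p\pi|_{H_p})=\{0\}$. Because $\xi$ is a submersion whose fibres are the leaves of the foliation integrating the involutive distribution $K$, we have $\ker T\xi=K$; together with $H\cap K=\{0\}$ this shows $T\xi|_H$ is fibrewise injective, so each $(H_K)_q$ is $m$-dimensional and $H_K$ is a smooth distribution of constant rank $m$ (smoothness following from that of $\beta$, $\xi$ and $H$). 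Containment $H_K\subset D_K$ is then immediate from $H\subset D$ and the use of the common representative $\beta$.

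For the transversality condition I would differentiate $\pi_K\circ\xi=\pi$ to obtain $T\pi_K\circ T\xi=T\pi$, whence
$$
T\pi_K\big((H_K)_q\big)=T\pi\big(H_{\beta(q)}\big)=T_{\pi(\beta(q))}M=T_{\pi_K(q)}M,
$$
using $\pi_K(q)=\pi(\beta(q))$. Ranging over all $q\in \sfrac JK$ gives $T\pi_K(H_K)=TM$. Thus $H_K$ is a constant-rank-$m$ smooth subdistribution of $D_K$ whose image under $T\pi_K$ is $TM$, which is precisely the definition of $D_K$ being $\pi_K$-transverse.

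The only genuinely delicate point is the well-definedness and the constancy of the rank of $H_K$: a priori $T\xi(H)$ might depend on the chosen representative in a fibre — the chosen $H$ need not itself be $\xi$-projectable, unlike $D$ — and could drop rank wherever $H$ meets $K$. Fixing a single section $\beta$ disposes of the first concern, since the lemma only requires \emph{some} valid witness rather than a canonical one, and the identity $H\cap K=\{0\}$ disposes of the second. The remaining verifications (smoothness of $H_K$ and the computation $T\pi_K\circ T\xi=T\pi$) are routine.
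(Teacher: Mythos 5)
Your proof is correct, and its core is the same chain-rule computation the paper uses: differentiating $\pi_K\circ\xi=\pi$ to get $T_q\pi_K\bigl(T_{\beta(q)}\xi(\cdot)\bigr)=T_{\beta(q)}\pi(\cdot)$ and concluding surjectivity onto $T_{\pi_K(q)}M$. The difference is that the paper's proof consists \emph{only} of that one-line computation, which verifies $T\pi_K(D_K)=TM$ but never exhibits the constant-rank-$m$ subdistribution that the paper's own definition of ``transverse'' demands. You supply exactly that missing witness: pushing forward the rank-$m$ witness $H\subset D$ through $\xi$ via the fixed section $\beta$, and checking $H\cap K=\{0\}$ (via $K\subset V(\pi)=\ker T\pi$ and the fibrewise injectivity of $T\pi|_H$) together with $\ker T\xi=K$ to see the rank is preserved. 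Your remark that $H$ itself need not be $\xi$-projectable, so that fixing $\beta$ is what makes $H_K$ well defined, is also the right observation. So your argument is not a different route so much as a completed version of the paper's; what it buys is that the conclusion actually matches the definition of transversality in use, rather than only the surjectivity half of it.
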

\begin{proof}
$$
T_q\pi_K({D_K}_q)=T_q\pi_K(T_{\beta(q)}\xi(D_{\beta(q)}))=T_{\beta(q)}(\pi_K\circ\xi)(D_{\beta(q)})=T_{\beta(q)}\pi\left( D_{\beta(q)}\right)=T_{\pi(\beta(q))}M=T_{\pi_K(q)}M.
$$
\end{proof}

\begin{lemma}
Let $D_1$ and $D_2$ be two transverse generalized distributions such that $D_1+D_2=H+B$, where $H$ is a transverse distribution and $B\subset K$. Then, $D_1^K=D_2^K$.
\end{lemma}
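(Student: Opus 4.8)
The plan is to push the relation forward along $T\xi$ and exploit the single fact that $K=\ker T\xi$. Since $\xi\colon J\to\sfrac{J}{K}$ is the quotient by the foliation integrating $K$, the kernel of $T_p\xi$ equals $K_p$ at every point $p\in J$; in particular $T\xi(B)=0$ because $B\subset K$. Throughout I would assume, as elsewhere in this section, that $D_1$ and $D_2$ are $\xi$-projectable, so that the associated distributions $D_1^K=T\xi(D_1)$ and $D_2^K=T\xi(D_2)$ are indeed defined.

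First I would obtain the two inclusions $D_1^K\subset H^K$ and $D_2^K\subset H^K$, where $H^K:=T\xi(H)$. Because $D_1\subset D_1+D_2=H+B$, applying the fibrewise-linear map $T\xi$ gives
$$
D_1^K=T\xi(D_1)\subset T\xi(H+B)=T\xi(H)+T\xi(B)=T\xi(H)=H^K,
$$
and symmetrically for $D_2^K$.

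Next I would compute the ranks. As $H$ is transverse of rank $m$, the map $T\pi|_H$ is a fibrewise isomorphism onto $TM$, so $H\cap V(\pi)=\{0\}$ and hence $H\cap K=\{0\}$; consequently $T\xi|_H$ is injective and $H^K$ has constant rank $m$. On the other hand $D_1$ and $D_2$ are transverse, so by the preceding lemma $D_1^K$ and $D_2^K$ are $\pi_K$-transverse and therefore have rank at least $m$ at every point of $\sfrac{J}{K}$.

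Finally I would conclude by a pointwise dimension count: at each $q\in\sfrac{J}{K}$ we have $m\leq\dim (D_1^K)_q\leq\dim (H^K)_q=m$, forcing $(D_1^K)_q=(H^K)_q$, and likewise for $D_2^K$. Hence $D_1^K=H^K=D_2^K$, as required. The inclusions and the identity $T\xi(B)=0$ are immediate; the one point that needs genuine care is the rank bookkeeping for these generalized (possibly non-constant-rank) distributions — specifically, justifying $H\cap K=\{0\}$ and then invoking $\pi_K$-transversality to squeeze each $D_i^K$ between rank $m$ from below and rank $m$ from above. This is where I expect the main (if modest) obstacle to lie.
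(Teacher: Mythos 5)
Your proof is correct and follows essentially the same route as the paper's: both first observe that $D_i\subset H+B$ with $T\xi(B)=0$ gives $T\xi(D_i)\subset T\xi(H)$, and then conclude equality by a dimension count using transversality of the $D_i$ and verticality of $K$. The only point worth flagging is that you (reasonably) read $H$ as having rank $m$ — as it does in the decomposition of Lemma \ref{lem:descomposa}, where this lemma is meant to be applied — which is in fact necessary for the statement to hold, and your explicit rank bookkeeping ($H\cap K=\{0\}$, $\pi_K$-transversality of $D_i^K$) just spells out what the paper leaves implicit.
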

\begin{proof}
Any vector of $D_1$ or  $D_2$ can be written as an element of $H$ plus elements of $K$, then $T\xi (D_i)\subset T\xi (H)$, for $i=1,2$. Since $D_1$ and $D_2$ are transverse and $K$ is vertical, the projection have the same dimension, thus $T\xi (D_1)=T\xi (H)=T\xi (D_2)$.
\end{proof}

\subsection{Kernel-related sections}

\begin{definition}
Two sections are \emph{kernel-related} if $\phi_1(p) \sim \phi_2(p)$ for every $p\in M$. Equivalently, $\xi\circ\phi_1=\xi\circ\phi_2$.
\end{definition}
This relations is an equivalence relation. This definition is perhaps the closest to the physical notion of kernel-related solutions. 

\begin{proposition}
Consider two integrable distributions $X_1$, $X_2$ solution of $\Omega$ and $\xi$-projectable. They are kernel-related if, and only if, at every point $p\in J$ the corresponding integrable sections $\phi_1$ and $\phi_2$ are kernel related.
\end{proposition}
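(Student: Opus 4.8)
The plan is to prove the two implications separately, using Theorem~\ref{theo:ext->kernel} together with the last lemma for the forward direction and Proposition~\ref{prop:kernel->ext} for the converse. Throughout, the bridge between the infinitesimal statement about the distributions $X_1,X_2$ and the global statement about their integral sections will rest on the identification $\ker T_p\xi=K_p$ (the fibres of $\xi$ are exactly the leaves of $K$) together with the uniqueness of the integral leaf of a transverse integrable distribution through a given point.

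For the forward direction, suppose $X_1\sim_\Omega X_2$, i.e. $X_1+X_2$ is an expanded solution. Since $\Omega$ is variational, Theorem~\ref{theo:ext->kernel} provides a decomposition $X_1+X_2=H+K_D$ with $H$ a transverse rank $m$ solution and $K_D\subset K$. Applying the last lemma to this decomposition yields $X_1^K=X_2^K$ as distributions on $\sfrac JK$; this common distribution is $\pi_K$-transverse and integrable, being the $\xi$-image of a projectable, integrable, $\pi$-transverse distribution. Now fix $p\in J$ and let $\phi_1,\phi_2$ be the integral sections of $X_1,X_2$ through $p$. Their projections $\xi\circ\phi_1$ and $\xi\circ\phi_2$ are integral sections of $X_1^K=X_2^K$, and both pass through $\xi(p)$ over the base point $\pi(p)$. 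By uniqueness of the integral leaf through a point and connectedness of $M$, they coincide on all of $M$, so $\xi\circ\phi_1=\xi\circ\phi_2$; that is, $\phi_1$ and $\phi_2$ are kernel-related sections.

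For the converse, assume that at every $p\in J$ the integral sections $\phi_1,\phi_2$ through $p$ satisfy $\xi\circ\phi_1=\xi\circ\phi_2$. Differentiating this identity at $b=\pi(p)$, and using that the tangent image of an integral section recovers the distribution, $T_b\phi_i(T_bM)=X_i|_p$, gives $T_p\xi(X_1|_p)=T_p\xi(X_2|_p)$. Since $\ker T_p\xi=K_p$, a short linear-algebra argument shows $X_1|_p+X_2|_p=X_2|_p+B_p$ with $B_p:=(X_1+X_2)|_p\cap K_p\subset K_p$: every $w\in(X_1+X_2)|_p$ lies in $X_2|_p+K_p$, and writing $w=a+k$ with $a\in X_2|_p$ forces $k=w-a\in(X_1+X_2)|_p\cap K_p$. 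As this holds at every point, $X_1+X_2=X_2+B$ with $B\subset K$. Since $X_2$ is a solution, it is in particular an expanded solution, so applying Proposition~\ref{prop:kernel->ext} iteratively to local generating sections of $B$ shows that $X_2+B=X_1+X_2$ is again an expanded solution, i.e. $X_1\sim_\Omega X_2$.

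The step I expect to be the main obstacle is the passage from the pointwise identity ($X_1^K=X_2^K$, respectively $T_p\xi(X_1|_p)=T_p\xi(X_2|_p)$) to the global identity $\xi\circ\phi_1=\xi\circ\phi_2$ of sections, and back. This is where integrability of $X_1,X_2$, their $\xi$-projectability, the identification $\ker T_p\xi=K_p$, and the connectedness of $M$ all enter simultaneously: one must verify that the projected distribution on $\sfrac JK$ is genuinely integrable, so that ``agreeing at one point implies agreeing everywhere'' is legitimately available, and that the integral sections chosen through the common point $p$ do project to one and the same integral section downstairs.
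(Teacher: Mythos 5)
Your proof is correct and follows essentially the same route as the paper's: the forward direction decomposes $X_1+X_2=H+K_D$, projects to a common distribution on $\sfrac JK$, and uses uniqueness of the integral section through a point, while the converse differentiates $\xi\circ\phi_1=\xi\circ\phi_2$ to obtain $T_p\xi(X_1|_p)=T_p\xi(X_2|_p)$ and concludes by adjoining kernel vectors. You are merely a bit more explicit than the paper in flagging the (implicit) variational hypothesis behind Theorem~\ref{theo:ext->kernel} and in invoking Proposition~\ref{prop:kernel->ext} for the final step.
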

\begin{proof}
Suppose $X_1$ and $X_2$ are kernel related, with $D\equiv X_1+X_2=H+K'$. $Y$, the distribution $\xi$-related to $D$, is transverse and has dimension $m$ thus it is the same as the distribution related to $X_1$ and $X_2$. This implies that $\xi_*\phi_1$ and  $\xi_*\phi_2$ are integral section of $Y$. They coincide at $\xi(p)$, therefore $\xi_*\phi_1=\xi_*\phi_2$ because $Y$ is involutive. For any point $x\in M$, $\xi_*\phi_1(x)=\xi_*\phi_2(x)$ thus $\phi_1(x)\sim\phi_2(x)$ and they are kernel related.

Conversely, if $\phi_1$ and $\phi_2$ are kernel related they project to a unique section $\xi_*\phi_1=\xi_*\phi_2$. Then, if $x=\pi(p)$:

$$
T_p\xi X^1_p=T_p\xi T_x\phi_1 T_xM=T_x(\xi\circ \phi_1)T_xM=T_x(\xi\circ \phi_2)T_xM=T_p\xi T_x\phi_2 T_xM=T_p\xi X^2_p.
$$
Therefore $D_p^1+D_p^2=D_p^1+K'_p$, where $K'_p\subset K_p$ and they are kernel-related.
\end{proof}

\subsection{Reduced system}

The multisymplectic form $\Omega$ is closed, then, for any strong kernel vector field $X\in\Gamma(K)=\mathfrak{X}^{V(\xi)}J$:
$$
\Lie_X\Omega=\d \inn(X)\Omega= \inn(X) \d \Omega=0.
$$
Therefore, $\Omega$ is $\xi$-projectable. In this case there exists a form $\Omega_K\in\Omega^{m+1}(\sfrac JK)$ such that, 
$$
\xi^*\Omega_K=\Omega.
$$
This form can be constructed as $\Omega_K=\beta^*\Omega$, for any $\beta$ a section of $\xi$. Since $\Omega$ is $\xi$-projectable, it does not depend on the section chosen. The quoted system is $(\sfrac JK,\Omega_K)$. 
\begin{lemma}
$\Omega_K$ is a multisymplectic form, that is, it is closed and $1$-nondegenerate.
\end{lemma}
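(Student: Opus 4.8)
The plan is to treat the two assertions of the lemma separately: closedness is formal, and essentially all the content lies in the $1$-nondegeneracy.

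First I would dispatch closedness. Since $\xi\colon J\to\sfrac JK$ is a surjective submersion, the pullback $\xi^*$ is injective on differential forms. From $\xi^*\Omega_K=\Omega$ and $\d\Omega=0$ I get $\xi^*\d\Omega_K=\d\,\xi^*\Omega_K=\d\Omega=0$, and injectivity of $\xi^*$ then forces $\d\Omega_K=0$.

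Next, for $1$-nondegeneracy the strategy is to compute $\ker^1\Omega_K$ in terms of $\ker^1\Omega$, the claim being $\ker^1_q\Omega_K=T\xi\bigl(\ker^1_p\Omega\bigr)$ for any $p\in\xi^{-1}(q)$. One inclusion is immediate: if $u\in\ker^1_p\Omega$ and $v_2,\dots,v_{m+1}\in T_q(\sfrac JK)$ are lifted through the submersion to $w_2,\dots,w_{m+1}\in T_pJ$, then $\Omega_K(T\xi\,u,v_2,\dots,v_{m+1})=(\xi^*\Omega_K)(u,w_2,\dots,w_{m+1})=\Omega(u,w_2,\dots,w_{m+1})=0$, so $T\xi(\ker^1\Omega)\subseteq\ker^1\Omega_K$. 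For the reverse inclusion I would fix a section $\beta$ of $\xi$ with $\Omega_K=\beta^*\Omega$, take $v\in\ker^1_q\Omega_K$, and set $w:=T\beta(v)$, so that $T\xi\,w=v$. To show $w\in\ker^1_{\beta(q)}\Omega$ I would test $\inn(w)\Omega$ against $\Lambda^mT_{\beta(q)}J$ using the splitting $T_{\beta(q)}J=\Ima T\beta\oplus K_{\beta(q)}$: on a decomposable $T\beta\,v_2\wedge\cdots\wedge T\beta\,v_{m+1}$ lying entirely in $\Ima T\beta$, the identity $\Omega(w,T\beta\,v_2,\dots)=\Omega_K(v,v_2,\dots)=0$ gives vanishing, while on any decomposable with a factor $k\in K$ one uses $\inn(k)\inn(w)\Omega=-\inn(w)\inn(k)\Omega=0$ because $K\subseteq\ker^1\Omega$. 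Hence $w\in\ker^1\Omega$ and $v=T\xi\,w\in T\xi(\ker^1\Omega)$, giving the equality.

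Finally I would conclude. Because $K=V(\xi)=\ker T\xi$ and $K\subseteq\ker^1\Omega$ always, the equality $\ker^1\Omega_K=T\xi(\ker^1\Omega)$ shows that $\Omega_K$ is $1$-nondegenerate exactly when $\ker^1\Omega=K$, that is, when the $1$-kernel of $\Omega$ is already $\pi$-vertical and is therefore completely removed by the quotient. I expect this verticality of the kernel to be the \textbf{main obstacle}: one must rule out horizontal kernel directions, since such a direction would survive the quotient and descend to a genuine degeneracy of $\Omega_K$. This is the step where the transversality of the solutions in $S(\Omega)$ (and, where needed, the variational hypothesis through Lemma \ref{lem:tech1}) has to be invoked to guarantee $\ker^1\Omega\subseteq V(\pi)$, after which the argument above closes.
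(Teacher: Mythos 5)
Your route is essentially the paper's: both arguments reduce the $1$-nondegeneracy of $\Omega_K$ to a statement about $\ker^1\Omega$ via the relation $\xi^*\Omega_K=\Omega$, and your explicit identification $\ker^1_q\Omega_K=T_p\xi(\ker^1_p\Omega)$ (proved with the splitting $T_{\beta(q)}J=\Ima T\beta\oplus K_{\beta(q)}$) is correct and in fact more careful than the published argument. The problem is the step you flag and then leave open: you need $\ker^1\Omega\subseteq V(\pi)$, equivalently $\ker^1\Omega=K$. The paper's proof makes exactly the same jump: from $\inn(X)\Omega=0$ it concludes that ``$X$ is a kernel vector field'', which by the paper's own definition means $X\in\ker^1\Omega\cap V(\pi)$, and only then does $T\xi(X)=0$ give the contradiction; the verticality of $X$ is never justified. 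So you have correctly isolated the crux, but neither you nor the paper closes it.

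Moreover, the way you propose to close it --- invoking the transversality of the solutions in $S(\Omega)$, or the variational hypothesis --- cannot work, because the lemma as stated is false without an extra hypothesis. Take $m=1$, $J=\R^3\to\R$ with $(t,q,p)\mapsto t$, and $\Omega=\d q\wedge\d p$. This form is closed and variational, and $S(\Omega)$ contains the transverse solution $\left<\partial/\partial t\right>$; yet $\ker^1\Omega=\left<\partial/\partial t\right>$ is horizontal, so $K=\{0\}$, $\sfrac{J}{K}=J$ and $\Omega_K=\Omega$ is $1$-degenerate. What the quotient by $K$ actually guarantees in general is only $\ker^1\Omega_K\cap V(\pi_K)=\{0\}$, which is the weaker statement the paper uses afterwards. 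To obtain full $1$-nondegeneracy one must assume $\ker^1\Omega\subseteq V(\pi)$ explicitly (or verify it in each application, as indeed holds in the examples of the paper); once that hypothesis is added, your argument is complete and somewhat cleaner than the paper's.
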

\begin{proof} Independent of the section $\beta$ chosen, 
$$
\d\Omega_K=\d\beta^*\Omega=\beta^*\d\Omega=0 \ .
$$
Now supose $\Omega_K$ is $1$-degenerate. That means there exists a vector field $Y\in\mathfrak{X}(\sfrac JK)$, different form $0$, such that $\inn(Y)\Omega_K=0$. But taking the pull-back by $\xi$, and considering that $X$ is $\xi$-related to $Y$:
$$
0=\xi^*\inn(Y)\Omega_K=\inn(X)\Omega.
$$
Then $X$ is a kernel vector field of $\Omega$, and $Y\circ\xi=T\xi(X)=0$, which is a contradiction.
\end{proof}

\begin{proposition} \label{prop:kernel}
\begin{enumerate}
    \item  If the section $\psi$ is a solution of the system $(J,\Omega)$, then $\xi\circ \psi$ is a solution of the system  $(\sfrac JK,\Omega_K)$
    \item If the section $\phi$ is a solution of the system $(\sfrac JK,\Omega_K)$, then $\beta\circ \phi$ is a solution of the system  $(J,\Omega)$ for every section $\beta$.
\end{enumerate}
\end{proposition}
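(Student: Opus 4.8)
The plan is to reduce everything to two facts established earlier in the section: that $\xi^*\Omega_K=\Omega$, and that a vector field $X\in\Gamma(K)$ satisfies $\inn(X)\Omega=0$ by the very definition of the kernel distribution $K=\ker^1\Omega\cap V(\pi)$. The essential computational tool is the standard identity relating contraction and pullback along $\xi$: if $X\in\mathfrak{X}(J)$ is $\xi$-related to $Y'\in\mathfrak{X}(\sfrac{J}{K})$, meaning $T\xi\circ X=Y'\circ\xi$, then $\xi^*\inn(Y')\Omega_K=\inn(X)\xi^*\Omega_K=\inn(X)\Omega$. Since $\xi$ is a submersion, every vector field on $\sfrac{J}{K}$ admits such a $\xi$-related lift, at least locally, which is all we need because the solution condition is a pointwise statement.

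For the first item, let $Y'\in\mathfrak{X}(\sfrac{J}{K})$ be arbitrary and choose a $\xi$-related lift $X$. Then
$$(\xi\circ\psi)^*\inn(Y')\Omega_K=\psi^*\bigl(\xi^*\inn(Y')\Omega_K\bigr)=\psi^*\inn(X)\Omega=0,$$
the last equality holding because $\psi$ is a solution of $(J,\Omega)$. As $Y'$ was arbitrary, $\xi\circ\psi$ is a solution of $(\sfrac{J}{K},\Omega_K)$.

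For the second item, write $\psi=\beta\circ\phi$ and note that $\xi\circ\beta=\mathrm{id}$ gives $\xi\circ\psi=\phi$. I must show $\psi^*\inn(Y)\Omega=0$ for every $Y\in\mathfrak{X}(J)$. The key observation is that at a point $x\in M$ with $p=\psi(x)$, the $m$-form $(\psi^*\inn(Y)\Omega)_x$ depends only on the value $Y_p\in T_pJ$ and is linear in it, so it suffices to treat $Y_p$ in two complementary pieces. Since $\ker T_p\xi=K_p$, any $Y_p$ can be written as $Y_p=\tilde Y_p+k_p$ with $k_p\in K_p$ and $\tilde Y_p$ the value at $p$ of a field $\tilde Y$ that is $\xi$-related to some $Y'\in\mathfrak{X}(\sfrac{J}{K})$: indeed, lift $T_p\xi(Y_p)$ to a projectable field and correct by an element of $K_p$. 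On the kernel part, $\inn(k)\Omega=0$, so that contribution vanishes; on the projectable part, the identity above together with $\xi\circ\psi=\phi$ gives $\psi^*\inn(\tilde Y)\Omega=\phi^*\inn(Y')\Omega_K=0$ because $\phi$ solves the reduced system. Adding the two pieces yields $(\psi^*\inn(Y)\Omega)_x=0$, and since $x$ and $Y$ are arbitrary, $\beta\circ\phi$ is a solution of $(J,\Omega)$.

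The only real subtlety, and the step I would be most careful about, is the second item's treatment of a general, non-projectable $Y$: the reduced solution condition only sees projectable fields, and the passage back to $J$ must absorb the non-projectable directions into $K$, where $\Omega$ is degenerate. The pointwise dependence of $\psi^*\inn(Y)\Omega$ on $Y_p$, combined with the splitting $T_pJ=K_p\oplus(\text{projectable directions})$, is precisely what makes this work: one part of the splitting is controlled by the kernel property and the other by the hypothesis on $\phi$.
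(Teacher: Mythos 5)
Your proof is correct and follows essentially the same route as the paper's: item 1 uses surjectivity of $T\xi$ together with $\xi^*\Omega_K=\Omega$ to pull the test vector back to $J$, and item 2 splits an arbitrary tangent vector at a point of $\Ima(\beta\circ\phi)$ into a kernel part (killed because $\inn(g)\Omega=0$ for $g\in K$) and a complementary part that descends to the quotient (killed because $\phi$ solves the reduced system). The only cosmetic difference is that you phrase the complement via $\xi$-related lifts while the paper uses the splitting $T_pJ=T_{\xi(p)}\beta(T\sfrac JK)+K_p$ induced by the section $\beta$; these are the same decomposition in substance.
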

\begin{proof}
\begin{enumerate}
    \item Since $\pi_K\circ\xi=\pi$, we have that $\pi_K\circ\xi\circ\psi=\pi\circ\psi=Id_M$. As a consequence $\xi$ restricts to a diffeomorphism  between $\Ima \phi$ and $\Ima (\xi\circ\phi)$. Fixed a point $q\in \Ima (\xi\circ\phi)$ we have a unique antimage $\xi(p)=q$, and for every $Y\in T_q\sfrac JK$, there exists a vector $v\in T_pJ$ such that $T_p\xi(v)=Y$. Then:
$$(\xi\circ\psi)^*\inn(Y)\Omega_K|_q=\psi^*\left(\xi^*\inn(T_p\xi(v))\Omega_K|_q\right)=\psi^*\inn(v)\Omega|_p=0\ .$$
    \item Conversely, in the points $p\in \Ima\beta$, the section induces the splitting $T_pJ=T_{\xi(p)}\beta(T\sfrac JK)+K_p$. Therefore, all vectors $v\in T_pJ$ can be written as $v=u+g$, where $u\in T_{\xi(p)}\beta (T\sfrac JK)$ and $g\in K_p$. Notice that only it is required to check the field equations at the points of the image of the section we are testing, so consider $p\in\Ima(\beta\circ\psi)$:
$$(\beta\circ\psi)^*\inn(v)\Omega|_p=\psi^*(\beta^*\inn(u+g)\Omega|_p)=\psi^*\inn(\beta_*v)\Omega_K|_{\xi(p)}=0\ .$$
We have used that $i(g)\Omega|_p=0$ because $g$ is a kernel vector.
\end{enumerate}
\end{proof}

Now $Ker^1(\Omega_K)_p\cap Ker(T_p\pi)=\emptyset$. If $\Omega$ is variational, from proposition 2 we know that there is no expanded solution. As a consequence, maybe $\Omega$ has several distributions solution, but they are not kernel-related. We manage to eliminate some of the multiple solutions of the system in a way similar of the mechanical systems. The discussion about which solutions are or are not physical relevant is beyond this work.

\section{Admissible solutions}\label{sec:adm}

In most applications, not every transverse distribution in $Ker^m\Omega$ is a valid solution, as some extra conditions are required. For instance, it is habitual to require that the distribution is integrable, and in Lagrangian system, the solutions are required to be holonomic. A distribution or section which are solutions and also satisfy these extra conditions are labelled \textit{admisible} (see remark \ref{rem:admisible}). 

The extra conditions are implemented with different structures in each model, which make it difficult to present a general theory. Unfortunately, in most applications it is required to understand the interplay between kernel related solutions and admissible solutions. In this section and in section \ref{sec:applications} we will discuss some common situations.

The general procedure we recommend is to study which solutions are kernel-related first, and then to impose admissibility. This is the usual method to deal with integrability in multisymplectic systems. This is because, generally, an admissible solution will be kernel-related to a non-admissible one.  With this approach one may perform the quotient as in section \ref{sec:quotient}, although the admissibility condition do not tend to project to the quotient. This is relevant if one tries to use proposition \ref{prop:kernel} to recover solutions from the quoted system. It would be needed to check which recovered solutions are actually admissible. An illustrative example is the case of Electromagnetism, where the holonomy and integrability conditions affect expanded solutions. 

There are special situations, like the Metric-Affine model for General Relativity \cite{gaset_new_2019}, where the kernel distribution is compatible with the admissibility conditions. In this case, the quoted system is an accurate representation of the whole system.

\subsection{Weak kernel distribution}

In some situations, there are vector fields $Y$ such that $i(Y)\Omega\neq 0$, but $\psi^*\inn(v)\Omega=0$ for any admissible section. In this section we present a reduction of the system by this kind of vector fields.

\begin{definition}
\begin{itemize}
    \item The \emph{weak kernel distribution} of the system $(J,\Omega)$ is 
    $$
    K_{w}\equiv\bigcup_{p\in J_f}\{v\in V_p |\, \psi^*\inn(v)\Omega|_{\psi^{-1}(p)}=0\ ,\, \text{ for all admisible sections } \psi:M\rightarrow J \text{ such that } p\in Im(\psi)\ \}.
    $$
    \item   A {\rm weak kernel vector field} of the system $(J,\Omega)$ is a vertical vector field such that
    $$
    \psi^*(\inn(Y)\Omega)=0\ ,\quad \text{ for all admisible sections } \psi:M\rightarrow J \ .
    $$
\end{itemize}
\end{definition}

\begin{example}
The Higher order Lagrangian field theories are defined over the jet bundle $J^{1}\pi$ of a fiber bundle $\pi:E\rightarrow M$. In these systems it is required that the admissible sections are holonomic. This can be encoded using $\mathcal{C}$, the Cartan codistribution, generated by the forms:
$$
\theta^\alpha=\d u^\alpha- u^\alpha_{i}\d x^i\,,
$$
where $(x^i,u^\alpha_i)$ are adapted coordinates in $J^{1}\pi$. Then, a section $\phi:M\rightarrow J^{1}\pi$ is holonomic (admissible) if $\phi^*\theta=0$, for all $\theta\in \mathcal{C}$ (see \cite{saunders_geometry_1989} for more details).

Given a regular Lagrangian $L: J^1\bar{\pi}\rightarrow \mathbb{R}$, the corresponding multisymplectic form has the local expression:
$$
\Omega=\d\left(u^\alpha_i\frac{\partial L}{\partial u^\alpha_i}-L\right)\wedge\d^mx-\d\frac{\partial L}{\partial u^\alpha_i}\wedge\d u^\alpha\wedge \d x^{m-1}_i\,.
$$
If we contract by a vector field tangent to the fibers $ J^1\bar{\pi}\rightarrow E$, we have that:
$$
i\left(\sfrac{\partial}{\partial u^\beta_j}\right)\Omega=u^\alpha_i\frac{\partial^2 L}{\partial u^\alpha_i\partial u^\beta_j}\d^mx-\frac{\partial^2 L}{\partial u^\alpha_i\partial u^\beta_j}\d u^\alpha\wedge \d x^{m-1}_i=-\frac{\partial^2 L}{\partial u^\alpha_i\partial u^\beta_j}\theta^\alpha\wedge \d x^{m-1}_i\,.
$$
Therefore, $\psi^*(\inn\left(\sfrac{\partial}{\partial u^\beta_j}\right)\Omega)=0$ for any holonomic solution. Namely, the vector fields in the directions of the velocities are weak kernel vector fields. This is also the case for higher order regular Lagrangians: the vertical vector fields of the fiber bundle $J^{2k-1}\pi\rightarrow E$ are weak kernel vector fields.
\end{example}

They are relevant when analyzing the condition set and the conserved quantities. For instance, the difficulties on defining the Hamilton-Cartan form for the Hamiltonian formalism for higher-order field theories is a consequence that the symetrization of the momenta leads only to weak kernel vector fields.

Consider an involutive subdistribution of  the weak kernel distribution $K'_w\subset K_w$ and assume that $\sfrac JK$ has a differential structure such that the quotient manifold $\xi:J\rightarrow \sfrac JK$ is a submersion. Given a section (kernel fixing) of $\xi$, we can define $\Omega_\beta \equiv\beta^*\Omega$, but it depends on the section because, in general, $\Omega$ is not constant in $\xi^{-1}(p)$ for some $p\in J/K'_w$.

\begin{proposition} \label{prop:kernel}
Consider an smooth section $\beta$ of $\xi$. 
\begin{enumerate}
    \item 
    If the section $\psi$ is a solution of the system $(J,\Omega)$ such that $\beta\circ\xi\circ\psi=\psi$, then $\xi\circ \psi$ is a solution of the system  $(\sfrac{J}{K'_w},\Omega_\beta)$
    \item
    If the section $\phi$ is a solution of the system $(\sfrac {J}{K'_w},\Omega_\beta)$, and $\beta\circ \phi$ is admisible, then $\beta\circ \phi$ is a solution of the system  $(J,\Omega)$. 
\end{enumerate}
\end{proposition}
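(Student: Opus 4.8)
The plan is to mimic the proof of the strong-kernel proposition of Section \ref{sec:quotient} wherever $\Omega_K$ was replaced by a canonical form, inserting exactly two modifications to account for the facts that (i) $\Omega_\beta=\beta^*\Omega$ now depends on the section $\beta$, and (ii) a weak kernel vector annihilates $\Omega$ only after pullback by an admissible section, not pointwise. Throughout I would use that $K'_w\subset V(\pi)$, so the quotient still projects to $M$ by a map $\pi_{K'_w}$ with $\pi_{K'_w}\circ\xi=\pi$, and that $K'_w=\ker T\xi$, so at any $p\in\Ima\,\beta$ one has the splitting $T_pJ=T_{\xi(p)}\beta(T(\sfrac{J}{K'_w}))\oplus K'_{w,p}$ coming from $\beta$ being a section of the submersion $\xi$. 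Involutivity of $K'_w$ is used only to guarantee this quotient structure.

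For part $1$, the key point is that the hypothesis $\beta\circ\xi\circ\psi=\psi$ says $\Ima\,\psi\subset\Ima\,\beta$, so setting $\tilde\psi:=\xi\circ\psi$ we get $\psi=\beta\circ\tilde\psi$. I would fix $Y\in\mathfrak{X}(\sfrac{J}{K'_w})$ and let $v:=\beta_*Y=T\beta(Y)$ be its pushforward along $\beta$ (defined on $\Ima\,\beta$, which is all the argument touches). Since $Y$ is $\beta$-related to $v$, the interior-product/pullback identity gives $\inn(Y)\Omega_\beta=\inn(Y)\beta^*\Omega=\beta^*\inn(v)\Omega$. Pulling back by $\tilde\psi$ and using $\psi=\beta\circ\tilde\psi$ then yields
$$(\xi\circ\psi)^*\inn(Y)\Omega_\beta=\tilde\psi^*\beta^*\inn(v)\Omega=\psi^*\inn(v)\Omega=0,$$
the last equality being that $\psi$ solves $(J,\Omega)$.

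For part $2$, I would check the field equations only at points $p\in\Ima(\beta\circ\phi)$, as is permitted. Decompose an arbitrary $v\in T_pJ$ as $v=u+g$ with $u\in T_{\xi(p)}\beta(T(\sfrac{J}{K'_w}))$ and $g\in K'_{w,p}$. Writing $u=T\beta(Y)$, the same computation as in part $1$ gives $(\beta\circ\phi)^*\inn(u)\Omega=\phi^*\inn(Y)\Omega_\beta=0$ because $\phi$ solves $(\sfrac{J}{K'_w},\Omega_\beta)$. For the vertical part, since $g\in K'_{w,p}\subset K_{w,p}$ and $\beta\circ\phi$ is admissible by hypothesis, the defining property of $K_w$ gives $(\beta\circ\phi)^*\inn(g)\Omega|_{(\beta\circ\phi)^{-1}(p)}=0$. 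Summing the two contributions gives $(\beta\circ\phi)^*\inn(v)\Omega=0$.

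The \emph{main obstacle}, and the only genuine departure from the strong-kernel case, is the vertical term in part $2$: since $g$ is merely a weak kernel vector, $\inn(g)\Omega$ need not vanish as a form on $J$, so it cannot be cancelled pointwise; the admissibility of $\beta\circ\phi$ is precisely what forces its pullback to vanish, which is why the statement both assumes $\beta\circ\phi$ admissible and restricts to the image of the tested section. A secondary point to handle carefully is that $\beta_*Y$ and the $\beta$-relatedness identity are only asserted on $\Ima\,\beta$; this causes no trouble because every pullback in the argument is taken along a section whose image lies in $\Ima\,\beta$ (by $\beta\circ\xi\circ\psi=\psi$ in part $1$, and by construction in part $2$).
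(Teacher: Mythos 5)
Your proposal is correct and follows essentially the same route as the paper: part 1 uses the identity $\inn(Y)\beta^*\Omega=\beta^*\inn(\beta_*Y)\Omega$ together with the hypothesis $\beta\circ\xi\circ\psi=\psi$, and part 2 uses the splitting $T_pJ=\Ima(T\beta)+K'_{w,p}$ on $\Ima\,\beta$, with admissibility of $\beta\circ\phi$ killing the weak-kernel term only after pullback. Your explicit remarks on why the weak-kernel term cannot be cancelled pointwise and why all pullbacks stay inside $\Ima\,\beta$ make the paper's implicit steps precise, but the argument is the same.
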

\begin{proof}
\begin{enumerate}
    \item For every $Y\in \Gamma (V_s)$, we have that $\beta_*Y\in \Gamma(V)$; then:
$$(\xi\circ\psi)^*\inn(Y)\Omega_s=(\xi\circ\psi)^*\inn(Y)\beta^*\Omega=(\beta\circ\xi\circ\psi)^*\inn(\beta_*Y)\Omega=\psi^*\inn(\beta_*Y)\Omega=0\ .$$
\item Since $\beta$ is a section of $\xi$ , we have the decomposition on the image of $\beta$: $V=Im(T\beta)+ K'_w$. Then, every
$X\in\Gamma(V)$ can be written as $X=Y+Z$ with $X\in\Gamma(Im(T\beta))$ and $Z\in\Gamma(K'_w)$:
$$(\beta\circ\phi)^*\inn(X)\Omega=(\beta\circ\phi)^*\inn(Y)\Omega+(\beta\circ\phi)^*\inn(Z)\Omega=(\phi)^*\inn(\beta^*Y)\beta^*\Omega=\phi^*\inn(\beta^*Y)\Omega_\beta=0.$$
The term with $Z$ vanishes because $\beta\circ\phi$ is admissible and $Z$ is a weak kernel vector field.
\end{enumerate}
\end{proof}

\section{Distributions and Mechanical systems}


Consider the fiber bundle $\pi: E\rightarrow\mathbb{R}$ with dim $E=n+1$. The mechanical information is encoded in a closed 2-form $\Omega$. The problem consists on finding a distributions $D$ such that $i_X\Omega=0$ for every section $X\in\Gamma(D)$. Moreover, it has to be transverse: $T_p\pi (\mathcal{D})=T_{\pi(p)}\mathbb{R}$, integrable and $dim D=1$. \footnote{Usually there are more features, like holonomy. For now on, I will not consider them.} 

This is the optimal physical situation, because if such a distribution can be found, then there exists a folliation of $E$ of dimension 1, where every leaf is (the image of) a section of $\pi$. 

In this work I will start with the distribution $K:=\cup_{p\in E} Ker(\Omega_p)$ and I will try to extract $D$ from it. First of all, I will assume that $K$ is transverse \footnote{It is equivalent to have at least one vector field which is transverse. If not, we should use the constraint algorithm} and has constant dimension. By definition $i_X\Omega=0$ for every $X\in\Gamma(K)$. Moreover, if $X,Y\in \Gamma(K)$, then $[X,Y]\in \Gamma(K)$. Indeed:

$$i_{[X,Y]}\Omega=\Lie_Xi_Y\Omega-i_Y\Lie_X\Omega=-i_Y(i_X\d\Omega+\d (i_X\Omega))=0.$$

So $K$ is involutive, and therefore integrable. If $dimK=k=1$, we had just found the solution: $D=K$\footnote{Unicitat?}. What we are interested is the case $k>1$. Notice that in this situations we have a folliation of dimension $k$. We assume that the extra $k-1$ degrees of freedom has no physical meaning. Therefore we should make something with them. What I will do is quoting the leaves to curves in a canonical way, thanks to the transverse condition.

Consider $G:=K\cap V(\pi)$ (vertical vectors), resulting $\Gamma(G)=\{g\in\mathfrak{X}(E)|i_g\Omega=0,\;i_g\pi^*\d t=0\}=$. As we will soon see, these vector fields contains the non-physical information, and I will call them kernel vector fields. Clearly, $G\subset K$ and the next lemma gives us its dimension:

\begin{lemma}
$G$ has codimension $1$ in $K$.
\end{lemma}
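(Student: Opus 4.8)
The plan is to read off the codimension from the transversality hypothesis via a single application of the rank--nullity theorem, applied pointwise and then noting everything is of constant rank. The key observation is that $G$ is exactly the kernel of the restriction of $T\pi$ to $K$. Indeed, fix $p\in E$ and consider the linear map
$$
T_p\pi|_{K_p}\ \colon\ K_p\longrightarrow T_{\pi(p)}\mathbb{R}.
$$
A vector $v\in K_p$ lies in the kernel of this map precisely when $T_p\pi(v)=0$, that is, when $v\in V_p(\pi)$; hence the kernel is exactly $K_p\cap V_p(\pi)=G_p$.

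Next I would invoke the transversality of $K$. By assumption $K$ is transverse, so $T_p\pi(K_p)=T_{\pi(p)}\mathbb{R}$, which means the image of $T_p\pi|_{K_p}$ is the full tangent space of the base. Since $\dim M=\dim\mathbb{R}=1$, this image is one-dimensional. Applying rank--nullity to $T_p\pi|_{K_p}$ then gives
$$
\dim K_p=\dim\bigl(\ker (T_p\pi|_{K_p})\bigr)+\dim\bigl(\mathrm{Im}(T_p\pi|_{K_p})\bigr)=\dim G_p+1,
$$
so $G_p$ has codimension $1$ in $K_p$ at every point $p$.

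Finally I would observe that this pointwise codimension statement upgrades to a statement about distributions. We have assumed $K$ has constant dimension, and the computation above shows $\dim G_p=\dim K_p-1$ is likewise independent of $p$; so $G$ is a genuine constant-rank subdistribution of $K$ of codimension $1$, as claimed. I do not anticipate a genuine obstacle here: the argument is essentially the rank--nullity theorem, and the only point requiring care is confirming that the constant-rank hypotheses make ''codimension'' meaningful as a statement about $G$ as a distribution rather than merely fibrewise. It is worth recording explicitly, as the proof of Lemma~\ref{lem:descomposa} is used elsewhere, that transversality is exactly what forces the image to be all of $T\mathbb{R}$; without it one would only obtain $\dim K_p-\dim G_p\in\{0,1\}$.
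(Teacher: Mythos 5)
Your proof is correct and is essentially the argument the paper gives: the paper identifies $G_p$ as the kernel of $v\mapsto i_v\pi^*\d t$ restricted to $K_p$ and exhibits the one-dimensional complement explicitly, by fixing a transverse $X\in\Gamma(K)$ and checking that $Y-\frac{y}{x}X_p\in G_p$ for every $Y\in K_p$, which is precisely your rank--nullity computation carried out with an explicit basis. The only difference is presentational: you invoke rank--nullity abstractly where the paper verifies spanning by hand.
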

\begin{proof}
Since $G\subset K$, it has finite dimension. Lets say $dim G=r$.
Consider a fixed transverse vector field $X\in\Gamma(K)$. For every point $p\in E$ we have that $i_X\pi^*\d t|_p=x\neq 0$. We also have a base of $G_p=<g^1,g^2,\dots,g^r>$. I will proof that $\{g^1,g^2,\dots,g^r,X_p\}$ is a base of $K_p$.

Given a vector $Y\in K_p$, define $y:=i_Y(\pi^*\d t)|_p$. We can have that $y=0$, then $y\in G_p$. Otherwise $y\neq 0$, then $Y-\frac yx X_p\in G_p$ because 
$$i_{(Y-\frac yx X_p)}(\pi^*\d t)|_p=i_{Y}(\pi^*\d t)|_p-\frac yx i_{X_p}(\pi^*\d t)|_p=y-y=0.$$ 
Therefore, $<g^1,g^2,\dots,g^r,X_p>$ span $K_p$. Moreover, $X_p\not\in G_p$ because it is transverse and $X_p\not\in V(\pi)$.
\end{proof}

In other words, the last lemma proved that $<\!X\!>\cup \;G$ span $K$. (Maybe more correct : $K=\cup_{p\in E}<X_p>+G_p$) 

As it is easy shown, $G$ is an involutive distribution\footnote{Potser no tant evident, pero no es complicat}, therefore induces a folliation. Is time to take the quotient. $\sfrac EG=\{\overline{p}\}$ is the set of leaves, with the natural application $\psi: E \hookrightarrow \sfrac EG$. Take $\phi$, a section of $\psi$. The projection is $\pi_G=\phi^*(\pi):\sfrac EG\rightarrow \mathbb{R}$. It is well defined because $G$ is vertical. The pull-back 2-form is $\Omega_G:=\phi^*(\Omega)$. For $v,u\in T_{\overline{p}}  \sfrac EG$, $\phi^*(\Omega)|_{\overline{p}}(u,v)=\Omega_{\phi(\overline{p})}(T_{\overline{p}}\phi(u),T_{\overline{p}}\phi(v))$.
Since $G\subset K$, it doesn't depend on the section choosen and $\Omega_G$ is well defined.\footnote{Es pot explicar millor, pero he de buscar algun lloc on expliquin detalladament el quocient de varietats}

Now $\cup_{p\in E} Ker({\Omega_G}_p)$ has dimension 1, and it gives a folliation of $\sfrac EG$ (which now has dimension $n+1-(k-1)$) where the leafs are curves, which are (the image of) sections of  $\pi$.

\section{Applications} \label{sec:applications}


\subsection{Electromagnetism}\label{app:EM}

Let $M=\mathbb{R}^4$ represent space-time, $P \rightarrow M$ the principle bundle with structure group $U(1)$ and $\pi:C\rightarrow M$ the associated bundle of connections (see \cite{castrillon_lopez_geometry_2001} for more details).
The Electromagnetic Lagrangian
\[
L =  -\frac{1}{4\mu_0}\eta^{\alpha\mu}\eta^{\beta\nu}F_{\mu\nu}F_{\alpha\beta}\ ,
\]
where $F_{\mu\nu}=A_{\nu,\mu}-A_{\mu,\nu}$ is the electromagnetic tensor field, $\gamma_\alpha\in C^\infty(M)$ are smooth functions (for $0\leq\alpha\leq 3$), $\eta^{\mu\nu}$ is the Minkowski metric on $M$ and $\mu_0$ is a constant. The Lagrangian energy is
\[
 E_\mathcal{L} = A_{\mu,\alpha}\frac{\partial L}{\partial A_{\mu,\alpha}}-L = \frac{1}{\mu_0}\eta^{\mu\nu}\eta^{\alpha\beta}A_{\mu,\alpha}F_{\nu\beta}+\frac{1}{4\mu_0}\eta^{\mu\nu}\eta^{\alpha\beta}F_{\beta\nu}F_{\alpha\mu}\ ,
\]
and the premultisymplectic form is
\[
\Omega=\d E_{\mathcal{L}}\wedge \d ^4x-\frac{1}{\mu_0}\left(\eta^{\alpha\nu}\eta^{\mu\beta}-\eta^{\alpha\beta}\eta^{\mu\nu}\right)\d A_{\beta,\nu}\wedge\d A_\alpha\wedge\d ^3x_\mu\ .
\]
The kernel distribution is $K=\left<\sfrac{\partial}{\partial A_{\alpha,\mu}}+\sfrac{\partial}{\partial A_{\mu,\alpha}}\right>$, which is involutive. We will now compute explicitly the solutions of the system to see how the kernel distribution helps to describe them.

Consider a transverse distribution $D=\left.\left<X_\mu\right>\right|_{\mu=0,1,2,3}$ with local expression
$$
X_\mu=\frac{\partial}{\partial x^\mu}+G_{\alpha,\mu}\frac{\partial}{\partial A_\alpha}+G_{\alpha\nu,\mu}\frac{\partial}{\partial A_{\alpha,\nu}}\,,
$$
for some functions $G_{\alpha,\mu}, G_{\alpha\nu,\mu}\in C^\infty( J^1\bar{\pi})$.
$D$ is a solution of the system \ref{dfn:sol}, without imposing integrability nor holonomy, if
$$
G_{\alpha,\mu}-G_{\mu,\alpha}=A_{\alpha,\mu}-A_{\mu,\alpha}\,,\quad \eta^{\nu\sigma}\eta^{\mu\tau}\left(G_{\tau\sigma,\nu}-G_{\sigma\tau,\nu}\right)=0\,.
$$
These equations have multiple solutions, which can be made more explicit by considering functions $R_{\alpha,\mu}, S_{\alpha\nu,\mu}, T_{\alpha\nu,\mu}\in C^\infty( J^1\bar{\pi})$ such that:
\begin{align*}
    R_{\alpha,\mu}-R_{\mu,\alpha}=0\,;\quad S_{\alpha\nu,\mu}-S_{\nu\alpha,\mu}=0\,;\quad
    T_{\alpha\nu,\mu}+T_{\nu\alpha,\mu}=0\,;\quad \eta^{\nu\mu}T_{\alpha\nu,\mu}=0\,.
\end{align*}
Then, the functions of the local expression of the distribution can be written as 
$$
G_{\alpha,\mu}=A_{\alpha,\mu}+R_{\alpha,\mu}\,,\quad G_{\alpha\nu,\mu}=T_{\alpha\nu,\mu}+S_{\alpha\nu,\mu}\,.
$$
Therefore, a distribution solution of the system has the local expression:
$$
X_\mu=\frac{\partial}{\partial x^\mu}+\left(A_{\alpha,\mu}+R_{\alpha,\mu}\right)\frac{\partial}{\partial A_\alpha}+\left(T_{\alpha\nu,\mu}+S_{\alpha\nu,\mu}\right)\frac{\partial}{\partial A_{\alpha,\nu}}\,.
$$
Every choice of $R_{\alpha,\mu}$, $S_{\alpha\nu,\mu}$ and $T_{\alpha\nu,\mu}$ produces a new solution. If we add a kernel vector field to any of the $X_\mu$ of a solution $D$, we generate a new solution with the same $R_{\alpha,\mu}$ and $T_{\alpha\nu,\mu}$ but different $S_{\alpha\nu,\mu}$. Indeed, two solutions are kernel-related if they have the same $R_{\alpha,\mu}$ and $T_{\alpha\nu,\mu}$. Thus, the kernel distribution describes the multiplicity of solution characterised by the functions $S_{\alpha\nu,\mu}$, but cannot describe the  multiplicity of solution characterised by the other functions.

This is a Lagrangian field theory, therefore, in order to have physical meaning, we should only consider as admissible solutions distributions which are integrable and holonomic. In local coordinates, a necessary condition for a distribution to be holonomic is that $G_{\alpha,\mu}=A_{\alpha,\mu}$. Therefore, $D$ is holonomic as long as $R_{\alpha,\mu}=0$. This is a common situation in the (pre)multisympelctic formalism of singular Lagrangians, as only part of the holonomic conditions can be recovered from the equations.

The distribution $D$ is integrable if $[X_\mu,X_\nu]=0$ for any $0\leq\mu<\nu\leq3$. If the distriution is holonomic, this implies:
$$
G_{\alpha\nu,\mu}-G_{\alpha\mu,\nu}=0\,;\quad  X_\mu (G_{\alpha\rho,\mu})-X_\nu(G_{\alpha\rho,\nu})=0\,,
$$
These are non-trivial equations which imposes relations between both families of functions. This implies that an equivalent class of solutions may contain some distributions that are integrable and others that are not.

Finally, a section $\phi:M\rightarrow  J^1\bar{\pi}$ is holonomic if it can be written as $j^1\psi$ for a section $\psi:M\rightarrow E$. The kernel distribution is vertical with respect to the projection $\pi^1: J^1\bar{\pi}\rightarrow E$, therefore, there exists a projection $\pi^1_K: \sfrac { J^1\bar{\pi}}{K}\rightarrow E$ such that $\pi^1_K\circ\xi=\pi^1$. Namely, the following diagram commutes:

$$
\xymatrix{
&  J^1\bar{\pi}\ar[rd]^<(0.45){\xi}\ar[dd]^<(0.45){\pi^1} \ & \ &
\\
& \ & \sfrac { J^1\bar{\pi}}{K}\ar[ld]_<(0.2){\pi^1_K} \ar[dd]^<(0.45){\pi_K} \ & 
\\
& E\ar[rd]^<(0.45){\pi}\ &  \ &
\\
& \ &  M \ &
}
$$

If two sections $\phi_1=j^1\psi_1$ and $\phi_2=j^1\psi_2$ are kernel-related, then $\xi\circ\phi_1=\xi\circ\phi_2$. Projecting to $E$ we have that 
$$
\pi^1_K\circ\xi\circ\phi_1=\pi^1_K\circ\xi\circ\phi_2\Rightarrow \pi^1\circ\phi_1=\pi^1\circ\phi_2\Rightarrow \psi_1=\psi_2\,.
$$
Therefore, $\phi_1$ and $\phi_2$ are equal. This situation happens in general for higher order Lagrangian field theories: if the kernel distribution is vertical with respect to the projection $J^r\bar{\pi}\rightarrow E$, then only one representant of an equivalence relation of sections is holonomic. In other words, only those kernel vector fields which induces a non-trivial flow on $E$ persist after imposing the holonomic condition.

The classical gauge symmetries of electromagnetism is the invariance under the transformation $A_\alpha\rightarrow A_\alpha+\frac{\partial f}{\partial x^\alpha}$, for some $f\in C^\infty(M)$. The infinitessimal version of this transformation is the vector field
$$
Y_f=\frac{\partial f}{\partial x^\alpha}\frac{\partial}{\partial A_\alpha}+\frac{\partial^2 f}{\partial x^\mu\partial x^\alpha}\frac{\partial}{\partial A_{\alpha,\mu}}\,,
$$
which has been lifted to the first jet. $Y$ is not a kernel nor weak kernel vector field. But we have that $$
\mathcal{L}_{Y_f}L=0\,;\quad \mathcal{L}_{Y_f}\Omega=0\,.
$$
Therefore, the gauge transformation generate Lagrangian symmetries. 

\subsection{Metric-Affine Lagrangian}\label{app:MA}

The Metric-Affine formulation of General Relativity is an interesting example, as it exhibits most of the characteristics discussed previously. The proofs of some of the results involve long computations, which I will avoid to write here but they can be found in \cite{gaset_new_2019}.

As we have seen in the electromagnetic case, the holonomy condition makes it difficult to have a clear description of the multiplicity of solutions and they relation to the kernel of the premultisymplectic form. Therefore, we will consider the covariant Hamiltonian formalism, which we will construct from the Lagrangian formalism.

The configuration bundle of this system is the bundle $\pi\colon E\rightarrow M$,
where $M$ is a connected orientable 4-dimensional manifold representing space-time and
${\rm E}=\Sigma\times_MC(LM)$, where $\Sigma$ is the manifold of Lorentzian metrics on $M$
and $C(LM)$ is the bundle of connections on $M$;
that is, linear connections in $\Tan M$. We will use the following coordinates in $J^1\pi$: $(x^\mu,\,g_{\alpha\beta},\,\Gamma^\nu_{\lambda\gamma},\,g_{\alpha\beta,\mu},
\,\Gamma^\nu_{\lambda\gamma,\mu})$, where $g_{\alpha\beta}$ represents the components of the metric and $\Gamma^\nu_{\lambda\gamma}$ the components of the connection which, in general, is not the Levi-Civita connection of $g$. The Lagrangian for Metric-Affine  gravity is:

$$L=\sqrt{|{\rm det}(g)|}\,g^{\alpha\beta}R_{\alpha\beta}(\Gamma)\equiv
\sqrt{|{\rm det}(g)|}\,g^{\alpha\beta}(\Gamma^{\gamma}_{\beta\alpha,\gamma}-\Gamma^{\gamma}_{\gamma\alpha,\beta}+
\Gamma^{\gamma}_{\beta\alpha}\Gamma^{\sigma}_{\sigma\gamma}-
\Gamma^{\gamma}_{\beta\sigma}\Gamma^{\sigma}_{\gamma\alpha})\,.\ $$

The Hamiltonian formalism takes place on in the image of the Legendre transform $\mathcal{FL}:J^1\pi\rightarrow J^*\pi$ \cite{echeverria-enriquez_multimomentum_2000}. It turns out that $Im(\mathcal{FL})$ is diffeomorphic to $E$ \cite{gaset_new_2019}. Therefore, we will consider the fiber bundle $\pi\colon E\rightarrow M$, with coordinates $(x^\mu,\,g_{\alpha\beta},\,\Gamma^\nu_{\lambda\gamma})$. The premultisymplectic form for the covariant Hamiltonian formalism is:

\beq\nonumber
\Omega_H=\d \left(\sqrt{|{\rm det}(g)|}g^{\alpha\beta}\left(\Gamma^\gamma_{\beta\sigma}\Gamma^\sigma_{\gamma\alpha}-\Gamma^\gamma_{\beta\alpha}\Gamma^{\sigma}_{\sigma\gamma}\right)\right)\wedge\d^4x -
\d\left(\sqrt{|{\rm det}(g)|}\left(\delta^\mu_\alpha g^{\beta\gamma}-\delta^\beta_\alpha g^{\mu\gamma}\right)\right)\wedge\d \Gamma^{\alpha}_{\beta\gamma}\wedge \d^3x_{\mu}  
\ .
\eeq

There is no transverse distribution which is solution of the system $(E,\Omega_H)$. Therefore, a constraint algorithm is performed in order to find a submanifold $E_f\subset  E$ where solutions exist and are tangent to $E_f$. This submanifold is described by the following constraints:
$$
t^\alpha_{\beta\gamma}\equiv T^\alpha_{\beta\gamma}-
\frac13\delta^\alpha_\beta T^\mu_{\mu\gamma}+
\frac13\delta^\alpha_\gamma T^\mu_{\mu\beta}=0 \,.
$$
 The torsion of the metric is defined as $T^\alpha_{\beta\gamma}\equiv\Gamma^\alpha_{[\beta\gamma]}=\Gamma^\alpha_{\beta\gamma}-\Gamma^\alpha_{\gamma\beta}$. The kernel distribution of $\Omega$ tangent to $E_f$ is
$$
K=
{\left<\delta^\alpha_\gamma\frac{\partial}{\partial \Gamma^\alpha_{\beta\gamma}}\right>} \,,
$$
where $\delta^\alpha_\delta$ is the identity tensor. Therefore, we expect that the solutions have a freedom at least in the direction for these vector fields. To check it, we will show the solutions explicitly. 

A transverse  distribution $D=\left.\left<X_\nu\right>\right|_{\nu=0,1,2,3}$ is solution of the system if
$$
X_\nu= \frac{\partial}{\partial x^\nu}+
\sum_{\sigma\leq\rho}\left(
g_{\sigma\lambda}\Gamma^\lambda_{\nu\rho}+g_{\rho\lambda}\Gamma^\lambda_{\nu\sigma}+
\frac{2}{3}g_{\sigma\rho}T^\lambda_{\lambda\nu}\right)
\frac{\partial}{\partial g_{\sigma\rho}}+\left
(\Gamma^\lambda_{\nu\gamma}\Gamma^\alpha_{\beta\lambda}+
C_{\beta\nu}\delta^\alpha_\gamma+K^\alpha_{\beta\gamma,\nu}\right)
\frac{\partial}{\partial \Gamma^\alpha_{\beta\gamma}}\,.
$$
for some functions 
$C_{\beta,\nu}, K^\alpha_{\beta\gamma,\mu}\in C^\infty(E_f)$ satisfying that
$$
K^\nu_{\nu\gamma\mu}=0\,; \quad
K^{\nu}_{\beta\gamma \nu}+K^\nu_{\gamma\beta \nu}=0 
\,; \quad K^\alpha_{[\beta\gamma],\mu}=-\frac13\delta^\alpha_{[\beta} K^\nu_{\gamma]\nu,\mu}-\Gamma^\lambda_{\mu[\gamma}\Gamma^\alpha_{\beta]\lambda}+\frac13\delta^\alpha_{[\beta}\Gamma^\lambda_{\mu\gamma]}\Gamma^\nu_{\nu\lambda}-\frac13\delta^\alpha_{[\beta}\Gamma^\lambda_{\mu\nu}\Gamma^\nu_{\gamma]\lambda}\,.
$$
Every distribution solution of the system is characterized by the choice of the functions $C_{\beta\mu}$ and $K^\alpha_{\beta\gamma,\nu}$. A class of kernel-related distributions is given by fixing only $K^\alpha_{\beta\gamma,\nu}$ because the freedom in choosing $C_{\beta\mu}$ is equivalent to adding a kernel-vector field.

A section $\phi:M\rightarrow E_f$ induces a metric and a connection, which, in general, it is not the Levi-Civita connection of the metric, even if $\phi$ is a solution of $(E_f,\Omega_H)$. To restrict to solutions with this property, it is usually imposed that $T_{\mu\nu}^\nu=0$, a situation which is reminiscent of the gauge freedom \cite{julia_currents_1998, dadhich_equivalence_2012}. It turns out that the condition $T_{\mu\nu}^\nu=0$ precisely fixes a respresentant in every class of kernel-related sections. In other words, every class of kernel-related solutions has exactly one respresentant such that the connection is the Levi-Civita connection of the metric \cite{gaset_new_2019}.

The kernel distribution $K$ is involutive, and we can consider the quotient $E_f\rightarrow \sfrac {E_f}{K}$. $\Omega$ is projectable to $\sfrac {E_f}{K}$ to a form $\Omega_k$, which is regular. It can be proven that the system $(\sfrac {E_f}{K},\Omega_K)$ is equivalent to the Hamiltonian system for the Einstein-Hilbert Lagrangian \cite{gaset_new_2019}.

\section{Conclusions and outlook}

We have used the kernel of a premultisymplecic form to classify its solutions. We have introduce the concept of expanded solutions, which generalizes the fact in mechanics that a solution plus a vector field of the kernel of the presympelctic form is another solution. In the case of variational premultisymplectic forms, there is an equivalence relation which classify the solutions in expanded solutions. We apply this results to sections, defining and equivalence relation of solutions. We also show how to reduce the system by the kernel distribution, and the solutions are related.

We also considered multisympelctic systems with extra conditions on the solutions. This leads to the definition of weak kernel vector fields. We apply the results to mechanics, electromagnetism and the Metric-Affine description of gravity.

This work is a contribution to the project of characterising the different solutions of a multisymplectic system. The study of the multiple solutions which do not arise from the degeneracy of the multisymplectic form remains open. It is a problem closely related to the reduction by symmetries, which this work may help to simplify.

\section*{Acknowledgments}

We acknowledge the financial support of the 
{\sl Ministerio de Ciencia e Innovaci\'on} (Spain), project D2021-125515NB-21.

\bibliographystyle{abbrv}
\bibliography{bibliografia.bib}

\end{document}